\documentclass[smallextended,envcountsame,envcountsect]{svjour3}
\smartqed
\journalname{Finance and Stochastics}
\def\subclassname{{\bfseries Mathematics Subject Classification}\enspace}

\usepackage{amsmath,amssymb}
\usepackage{amsfonts}
\usepackage[paperwidth=439.37pt,paperheight=666.142pt,left=52pt,right=51pt,top=54pt,bottom=61pt,headsep=16pt]{geometry}
\usepackage{etoolbox}
\makeatletter
\renewcommand\makeheadbox{}%
\patchcmd{\@maketitle}{\hrule\@height0.35mm}{}{}{\typeout{RULEPATCHFAIL}}%
\makeatother

\usepackage{graphicx}
\usepackage{xcolor}
\usepackage{lastpage}

\usepackage{booktabs}
\usepackage{enumitem}
\usepackage{listings}
\usepackage[english]{babel}
\usepackage{framed}
\usepackage{mathrsfs}
\usepackage{pifont}
\usepackage{xparse}
\usepackage{float}
\usepackage{tikz}
\usetikzlibrary{fit,shapes.geometric}

\allowdisplaybreaks

\numberwithin{equation}{section}

\definecolor{Kblue}{RGB}{0,40,104}
\definecolor{Kgrey}{RGB}{224,224,224}
\definecolor{Kgreen}{RGB}{0,153,0}

\usepackage{hyperref}

\hypersetup{
    colorlinks=true,
    linkcolor=Kblue,
    citecolor=blue,
    urlcolor=Kblue
}

\makeatletter

\def\eps{\varepsilon}

\NewDocumentCommand{\Expxt}{m g}{%
\IfNoValueTF{#2}%
 {\relax\if@display
         \mathbb{E}_{x_t}\!\left[{#1}\right]%
         \else
         \mathbb{E}_{x_t}[{#1}]%
         \fi}%
{\relax\if@display
\mathbb{E}_{x_t}\!\left[{#1}\,\middle|\,{#2}\right]%
\else
\mathbb{E}_{x_t}[{#1}\,|\,{#2}]%
\fi}%
}

\NewDocumentCommand{\Expx}{m g}{%
\IfNoValueTF{#2}%
 {\relax\if@display
         \mathbb{E}_x\!\left[{#1}\right]%
         \else
         \mathbb{E}_x[{#1}]%
         \fi}%
{\relax\if@display
\mathbb{E}_x\!\left[{#1}\,\middle|\,{#2}\right]%
\else
\mathbb{E}_x[{#1}\,|\,{#2}]%
\fi}%
}

\NewDocumentCommand{\Exp}{m g}{%
\IfNoValueTF{#2}%
 {\relax\if@display
         \mathbb{E}\!\left[{#1}\right]%
         \else
         \mathbb{E}[{#1}]%
         \fi}%
{\relax\if@display
\mathbb{E}\!\left[{#1}\,\middle|\,{#2}\right]%
\else
\mathbb{E}[{#1}\,|\,{#2}]%
\fi}%
}

\makeatother

\DeclareMathOperator{\supp}{supp}

\newcommand{\E}{{\mathbb{E}}}
\renewcommand{\P}{{\mathbb{P}}}
\newcommand{\1}{\mathbf{1}}

\begin{document}

\title{Optimal Prediction of Resistance and Support Levels under Constant Elasticity of Variance Processes}
\titlerunning{Optimal prediction of resistance and support levels under CEV}
\author{Ruibo Ma}
\authorrunning{R. Ma}
\institute{R. Ma \at Department of Mathematics, The University of Manchester, Oxford Road, Manchester M13 9PL, United Kingdom \\ \email{ruibo.ma@manchester.ac.uk}}
\date{}
\maketitle

\begin{abstract}
Assuming that the asset price $X$ follows a constant elasticity of variance process, we study the optimal prediction problem $\inf_{0\leq \tau\leq T}\E|X_\tau-\ell|$ over stopping times of $X$, where $\ell$ is a hidden aspiration level independent of $X$. Under the aspiration level hypothesis, we show that for a class of admissible laws the boundaries are located relative to the median interval of $\ell$ and predict the resistance and support levels. Their existence is proved and nonlinear integral equations characterise them uniquely. Under positive drift the stopping set is a band between two curves, and under negative drift a single boundary.
\keywords{optimal prediction \and optimal stopping \and constant elasticity of variance \and resistance and support levels \and strict local martingale}
\subclass{60G40 \and 60J60 \and 91G80 \and 35R35 \and 60J55}
\end{abstract}

\noindent\textbf{JEL Classification}\enspace C61 $\cdot$ G11 $\cdot$ G17
\medskip

\section{Introduction}

This paper extends the optimal prediction approach to resistance and support levels developed by De Angelis and Peskir \cite{optimal_prediction_od_resistence_and_support_level}. In that framework, resistance and support levels are viewed as hidden targets generated by the trading behaviour of market participants. The modelling idea is based on the aspiration level hypothesis \cite{Simon1955_A_Behavioral_Model_of_Rational_Choice}: a trader entering the market has in mind a target price at which he or she is willing to trade in the future. Since such targets are not directly observable and may vary across traders, the aspiration level is modelled as a random variable. The resulting problem is then formulated as a finite-horizon optimal stopping problem.

De Angelis and Peskir studied this problem under the assumption that the asset price follows a geometric Brownian motion. For a suitable class of admissible aspiration-level laws, they proved that the optimal trading boundaries can be interpreted as conditional median curves of the hidden target. The aim of the present paper is to investigate how this theory changes when the geometric Brownian motion model is replaced by a constant elasticity of variance process.

More precisely, we assume that the asset price follows the CEV diffusion
\begin{equation}\label{intro:cev}
    dX_t=\mu X_t\,dt+\sigma X_t^{\beta+1}\,dB_t,
\end{equation}
where $\mu\in\mathbb{R}$, $\sigma>0$, and $\beta>0$. The case $\beta=0$ reduces to the geometric Brownian motion case considered in \cite{optimal_prediction_od_resistence_and_support_level}. The CEV model is a natural extension because it allows volatility to depend on the current price level. It has been widely used in option pricing and in the modelling of level-dependent volatility effects \cite{the_constant_elasticity_of_variance_option_pricing_model}. Further developments and applications of the CEV model can be found in \cite{The_valuation_of_options_for_alternative_stochastic_processes}, \cite{Further_Results_on_the_Constant_Elasticity_of_Variance_Call_Option_Pricing_Model}, and \cite{Cev_chapter}.

The extension from geometric Brownian motion to the CEV process is not a direct substitution. The geometric Brownian motion case has an explicit exponential solution and simple martingale properties, and these facts are useful in the analysis of the stopping set. In the CEV setting, such tools are no longer available in the same form. Although the CEV process has useful connections with Bessel processes \cite{Cev_chapter} and related diffusion transformations \cite{A_jump_to_default_extended_CEV_model}, it does not have the same simple pathwise representation as geometric Brownian motion.

We keep the hidden-target formulation of \cite{optimal_prediction_od_resistence_and_support_level}. The aspiration level $\ell$ is assumed to be non-negative, independent of the price process, and distributed according to a given law. As in the geometric Brownian motion case, the original prediction problem can be reduced to an optimal stopping problem with a convex loss function determined by the distribution function of $\ell$. The median interval of $\ell$ remains important, but the CEV dynamics lead to a different structure of the stopping region.

A key difference from the geometric Brownian motion case is the shape of the stopping region. Under suitable admissibility assumptions on the law of the aspiration level, we show that the positive drift case leads to a stopping band. More precisely, when $\mu>0$, there exist two time-dependent boundaries $a$ and $c$ such that stopping is optimal for prices lying between them, while continuation is optimal both below the lower boundary and above the upper boundary. This contrasts with the one-sided boundary obtained in the geometric Brownian motion setting of \cite{optimal_prediction_od_resistence_and_support_level}. The upper continuation region is a distinctive feature of the positive-elasticity CEV model. In particular, the lower boundary converges to a finite level at maturity, whereas the upper boundary tends to infinity. When $\mu<0$, the stopping region is one-sided and is described by a single increasing boundary.

The optimal boundaries are characterised by nonlinear integral equations. In the positive drift case, the two boundaries solve a coupled system of integral equations; in the negative drift case, the single boundary solves one integral equation. We prove the existence, continuity and uniqueness of the corresponding boundaries. The proof uses methods of optimal stopping and free-boundary theory, together with smooth fit in \cite{optimal_stopping_and_free_boundary_problem} and the local time-space formula from \cite{A_Change-of-Variable_Formula_with_Local_Time_on_Curves}.

We include numerical examples for admissible aspiration-level laws. These examples illustrate the shape of the optimal boundaries and show how the nonlinear integral equations can be solved numerically. The numerical section is used to support the theoretical results rather than to propose a separate numerical method.

The comparative statics further show that the elasticity coefficient has a significant effect on the location of the optimal boundaries. In the positive drift case, decreasing $\beta$ pushes the upper boundary $c(t)$ upward and enlarges the stopping band; in particular, the numerical results provide evidence that $c(t)\to\infty$ for each fixed $t$ as $\beta\to 0$, which is consistent with the GBM case. In the negative drift case, decreasing $\beta$ raises the support boundary $b(t)$ and therefore enlarges the one-sided stopping region. These observations highlight the role of the level-dependent volatility in the CEV model in shaping the optimal prediction of resistance and support levels. Fuller details will be given in Section \ref{sec:comparative} below.

The paper is organized as follows. Section \ref{sec:cev} recalls the CEV process and its relevant properties. Section \ref{sec:formulation} formulates the optimal prediction problem and reduces it to an optimal stopping problem. Section \ref{sec:solution} solves the free-boundary problem and derives the integral equations for the optimal boundaries. Section \ref{sec:examples} presents numerical examples and Section \ref{sec:comparative} gives the comparative statics analysis.

\section{The constant elasticity of variance process}\label{sec:cev}

In this section we introduce the \textit{Constant Elasticity of Variance (CEV) process} and its properties that are useful to solve the problem.

1. \emph{Definition and origin}. The CEV process was first introduced by Cox \cite{the_constant_elasticity_of_variance_option_pricing_model} in work dating back to 1975 in order to capture the leverage effect, i.e., the empirically observed inverse relation between the stock price level and its volatility. In Cox's original parameterisation $dS_t=\mu S_t\,dt+\sigma S_t^{\theta/2}\,dB_t$ the exponent was restricted to $\theta\in[0,2)$, which corresponds to $\beta=(\theta-2)/2\in[-1,0)$ in the convention of the defining equation below; the model was later extended by Emanuel and MacBeth \cite{Further_Results_on_the_Constant_Elasticity_of_Variance_Call_Option_Pricing_Model} to the case $\theta>2$, i.e., $\beta>0$. Finally, Yuen et al. \cite{ESTIMATION_IN_THE_CONSTANT_ELASTICITY_OF_VARIANCE_MODEL} were the first to remove the restriction on the elasticity parameter. We give a definition of a CEV process in the following.

\begin{definition}[\textbf{Constant Elasticity of Variance process}] \label{def 2.1}
Let $(\Omega,\mathcal F,(\mathcal F_t)_{t\ge0},\mathbb P)$ be a filtered probability space satisfying the usual conditions, and let $B=(B_t)_{t\ge0}$ be a standard $(\mathcal F_t)$-Brownian motion. Fix parameters $\mu\in\mathbb R$, $\sigma>0$, $\beta\in\mathbb R$, and an initial value $x>0$.

A continuous, nonnegative, $(\mathcal F_t)$-adapted process $X=(X_t)_{t\ge0}$ is called a \emph{Constant Elasticity of Variance (CEV) process} with parameters $(\mu,\sigma,\beta,x)$ if $X_0=x$ and, for every $t<\tau_0$, where $\tau_0:=\inf\{s\ge0: X_s=0\}$ (with $\tau_0=\infty$ when $0$ is never attained),
\begin{equation} \label{2.1}
    dX_t=\mu X_t\,dt+\sigma X_t^{\beta+1}\,dB_t .
\end{equation}
If the boundary point $0$ is attainable, we additionally impose that it is absorbing. Namely, on $\{\tau_0<\infty\}$ we set $X_t=0$ for all $t\ge\tau_0$.
\end{definition}

2. \emph{Bessel modification}. According to Linetsky and Mendoza's result \cite{Cev_chapter}, the CEV process with parameters $(\mu,\sigma,\beta,x)$ defined as in Definition \ref{def 2.1} with $\beta>0$ has a Bessel modification whose expression is given by
\begin{equation}\label{2.2}
    X_t=e^{\mu t}\left(\sigma |\beta| R_{T_t}^{(\nu)}\right)^{-\frac{1}{\beta}},
\end{equation}
where $R_t^{(\nu)}$ is a Bessel process with index $\nu=1/(2\beta)$, $R_0^{(\nu)}:=r_0=x^{-\beta}/(\sigma |\beta|)$, and $T_t:=(e^{2\mu \beta t}-1)/(2 \mu \beta)$ is a deterministic time and when \(\mu=0\), \(T_t\) is understood as its limiting value \(t\). By the above Bessel modification, one can also derive the CEV density by using the well-known Bessel density:
\begin{equation}\label{2.3}
    p^{(\mu)}(x,z;t)=e^{-\mu t}\,p^{(0)}\!\left(x,e^{-\mu t}z;T_t\right),
\end{equation}
where
\begin{equation}\label{2.4}
\begin{aligned}
    p^{(0)}(x,z;t)
    &=
    \frac{z^{-2\beta-\frac{3}{2}}\,x^{\frac{1}{2}}}{\sigma^2|\beta|\,t}
    I_{|\nu|}\!\left(
    \frac{x^{-\beta}z^{-\beta}}{\sigma^2\beta^2 t}
    \right)
    \exp\!\left(
    -\frac{x^{-2\beta}+z^{-2\beta}}{2\sigma^2\beta^2 t}
    \right).
\end{aligned}
\end{equation}
Here $I_{|\nu|}$ is the modified Bessel function of the first kind of order $\nu$. Moreover, the expectation of a CEV process with positive elasticity coefficient has an explicit expression:
\begin{equation}\label{2.5}
\mathbb{E}_x[X_t]
= x e^{\mu t}
\left(
1 - Q\!\left(
\nu,
\frac{\mu x^{-2\beta}}
{\sigma^2 \beta \left(e^{2\mu\beta t}-1\right)}
\right)
\right),
\end{equation}
where
\begin{equation}\label{2.6}
Q(\nu,x)
=
\frac{1}{\Gamma(\nu)}
\int_x^\infty u^{\nu-1}e^{-u}\,du .
\end{equation}
Based on the Bessel modification above, we can conclude that the CEV process is a Feller process, hence has strong Markov property.

3. \emph{Boundary behaviours}. Based on theorems from Feller's work \cite{Diffusion_processes_in_one_dimension} and methods in Borodin and Salminen's book \cite{handbook_of_brownian_motion}, the nature of the boundary at the origin depends on the parameter $\beta$. It is classified as a regular boundary for $\beta < -1/2$, an exit boundary for $-1/2 \le \beta < 0$, and a natural boundary for all other cases, namely $\beta \ge 0$. On the other hand, the boundary at infinity is an entrance boundary for $\beta>0$ and a natural boundary for $\beta \le 0$. Since the boundary behaviour becomes more complicated when $\beta<0$, the present paper focuses on the case $\beta \ge 0$ and leaves negative elasticity constants for further investigation.

4. \emph{Martingale properties of CEV process}. It can be proved, by Kotani's theorem (see e.g. \cite{Kotani's_theorem}), that the discounted process
\(e^{-\mu t}X_t\) is a non-negative strict local martingale when \(\beta>0\). This result is
also supported by several papers such as \cite{Simulation_of_the_CEV_process_and_the_local_martingale_property}. Moreover, a CEV process with positive elasticity parameter is a supermartingale when $\mu<0$, but in the positive drift case one can only use local submartingale type estimates. The strict local martingale property is the main difference between the present paper and \cite{optimal_prediction_od_resistence_and_support_level}. We will mainly discuss the case $\mu>0$, $\beta>0$ in the present paper. The case $\mu<0$, $\beta>0$ can be treated similarly, and is easier since the CEV process is a supermartingale under these coefficients.

5. \emph{Solutions of CEV SDE}. The solution of the CEV stochastic differential equation depends on the value of $\beta$. For $\beta=0$, the model becomes geometric Brownian motion and has a pathwise unique strong solution. For $\beta>0$, the coefficients are locally Lipschitz on $(0,\infty)$, so the equation has a pathwise unique strong solution on the positive state space. The case $\beta=-1$ also gives a unique strong solution, since the diffusion coefficient is constant. For $\beta\in[-1/2,0)$, weak existence follows from the Engelbert--Schmidt criterion \cite{On_solutions_of_one-dimensional_stochastic_differential_equations_without_drift}, and pathwise uniqueness follows from the Yamada--Watanabe criterion \cite{On_the_uniqueness_of_solutions_of_stochastic_differential_equations}; therefore a pathwise unique strong solution exists in this range. For $\beta<-1/2$ with $\beta\neq-1$, weak solutions exist, and the coefficients are again locally Lipschitz on $(0,\infty)$, so that pathwise uniqueness holds up to the hitting time of $0$; since $0$ is a regular boundary in this range, the process is determined only once a boundary behaviour is prescribed --- under the absorbing convention of Definition \ref{def 2.1} the solution is unique, whereas without a prescribed boundary behaviour uniqueness genuinely fails for $\beta\in(-1,-1/2)$. In particular, for every $\beta\ge0$ the equation \eqref{2.1} has a pathwise unique strong solution, so that the prediction problem studied below is well posed.

\section{Formulation of the problem}\label{sec:formulation}

In financial mathematics, resistance and support levels are the price levels at which most traders prefer to sell and buy, respectively. In practice, however, these levels are difficult to identify since they emerge from the conflicting actions of buyers and sellers and are not directly observable. Instead, they act as hidden targets, which is the terminology introduced by \cite{optimal_detection_of_a_hidden_target}. To formalize this, we employ the aspiration level hypothesis (see \cite{Simon1955_A_Behavioral_Model_of_Rational_Choice}), which states that traders enter the market with a target price in mind at which they are willing to trade. Furthermore, we introduce the concept of the \textit{representative trader}, since it is unrealistic to assume identical aspiration levels across all traders. Assume the aspiration level of this representative trader to be a random variable $\ell \geq 0$, independent of $X$, with continuous distribution function $F$ satisfying $F(0)=0$ and finite expectation, i.e., $\E \ell<\infty$. This leads naturally to the following optimal prediction problem
\begin{equation}\label{3.1}
    V_*(x)=\inf_{0 \leq \tau \leq T} \E_{x}|X_{\tau}-\ell|.
\end{equation}
Here $X$ is a CEV process defined as in Definition \ref{def 2.1}, $\tau$ is any stopping time of $X$ bounded by a given and fixed maturity time $T>0$, and $\E_{x}$ represents the expectation taken under the measure $\P_{x}$ under which $\P_x(X_0=x)=1$. Intuitively, we aim to find the time when the asset price is closest to the representative trader's aspiration level. The nearer a trade occurs to this level, the closer the trade is to the representative target. A stopping time, denoted by $\tau_*$, is considered to be optimal if it achieves the infimum in \eqref{3.1}. Note that the aspiration level $\ell$ \textbf{may} correspond to a support or resistance level, but it is \textbf{not necessarily} one of them. Hence, we treat $\ell$ as an important reference point to predict where the levels will be rather than a real resistance or support level. In practice, the parameters of CEV processes can be inferred from past price and volume data or other available information, such as insider knowledge, by statistical or numerical methods.

We can reduce our problem by the following lemma.

\begin{lemma}\label{lem:loss}
The following identity holds
\begin{equation}\label{3.2}
    \E|x-\ell|=2G(x) + \E \ell
\end{equation}
for all $x>0$, where
\begin{equation}\label{3.3}
    G(x)=\int_0^x \left(F(y)-\frac{1}{2}\right)dy
\end{equation}
is defined as the loss function. Moreover, $G(x)$ can be written in the following expression:
\begin{equation}\label{3.4}
    G(x)=\frac{x}{2}-\E \ell + R(x),
\end{equation}
where $R(x)=\int_x^{\infty} (1-F(y))dy$.
\end{lemma}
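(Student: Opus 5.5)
The plan is to obtain both identities from the layer-cake representation of expectations of nonnegative variables, using that $\ell\ge 0$ has continuous distribution function $F$ with $F(0)=0$ and $\E\ell<\infty$. Fix $x>0$. Since $|x-\ell|=(x-\ell)^+ + (\ell-x)^+$, we treat the two parts separately. For the first part we write $(x-\ell)^+=\int_0^x \1_{\{\ell\le y\}}\,dy$. Tonelli's theorem then gives
\[
\E(x-\ell)^+=\int_0^x F(y)\,dy .
\]
For the second part we write $(\ell-x)^+=\int_x^\infty \1_{\{\ell>y\}}\,dy$, and Tonelli gives
\[
\E(\ell-x)^+=\int_x^\infty (1-F(y))\,dy=R(x).
\]
This is finite because $R(x)\le R(0)=\E\ell<\infty$. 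The identity $R(0)=\E\ell$ is the same layer-cake formula with $x=0$.

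To prove \eqref{3.2}, we split $R(x)=R(0)-\int_0^x(1-F(y))\,dy=\E\ell - x+\int_0^x F(y)\,dy$. Adding the two parts gives
\[
\E|x-\ell| = 2\int_0^x F(y)\,dy - x + \E\ell = 2\int_0^x\Bigl(F(y)-\tfrac12\Bigr)dy+\E\ell = 2G(x)+\E\ell .
\]

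For \eqref{3.4}, we start from the definition \eqref{3.3}:
\[
G(x)=\int_0^x F(y)\,dy-\tfrac{x}{2}.
\]
Substituting $\int_0^x F(y)\,dy = x-\E\ell+R(x)$ from the previous step gives $G(x)=\tfrac{x}{2}-\E\ell+R(x)$, which is the claimed expression.

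There is no real obstacle in this argument. The only points needing care are the use of Tonelli's theorem, which is justified because all integrands are nonnegative, and the finiteness of $\E\ell$. The latter ensures that $R$ is finite and that the subtraction $R(0)-\int_0^x(1-F)$ is legitimate. Continuity of $F$ is not actually needed; it only makes the choice between $\{\ell\le y\}$ and $\{\ell<y\}$ in the indicators irrelevant.
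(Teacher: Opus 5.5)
Your argument is correct. For \eqref{3.4} it is the same computation as the paper's. The paper writes $G(x)=\tfrac{x}{2}-\int_0^x(1-F(y))\,dy$ and then splits $\int_0^x(1-F)\,dy=\int_0^\infty(1-F)\,dy-\int_x^\infty(1-F)\,dy=\E\ell-R(x)$. This is exactly your identity $R(x)=R(0)-\int_0^x(1-F(y))\,dy$ with $R(0)=\E\ell$.

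The difference lies in \eqref{3.2}. The paper does not prove it but quotes it from \cite{optimal_prediction_od_resistence_and_support_level}. You derive it by splitting $|x-\ell|=(x-\ell)^{+}+(\ell-x)^{+}$ and applying the layer-cake formula to each part, which gives $\E(x-\ell)^{+}=\int_0^xF(y)\,dy$ and $\E(\ell-x)^{+}=R(x)$.

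Your route makes the lemma self-contained. It also shows that only $\ell\ge0$ and $\E\ell<\infty$ are used; continuity of $F$ and $F(0)=0$ are not needed. Even with atoms, $\P(\ell=y)>0$ for at most countably many $y$, so the choice between $\{\ell\le y\}$ and $\{\ell<y\}$ does not affect the $dy$-integrals. As a by-product you get a probabilistic reading of \eqref{3.4}, namely $G(x)=\tfrac{x}{2}-\E\ell+\E(\ell-x)^{+}$. The paper's purely analytic manipulation of \eqref{3.3} is shorter but relies on the external reference for \eqref{3.2}.
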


\begin{proof}
The identity \eqref{3.2} is a result in \cite{optimal_prediction_od_resistence_and_support_level}, and \eqref{3.4} follows from \eqref{3.3}:
\begin{align}
    G(x)&=\int_0^x \left(F(y)-\frac{1}{2}\right)dy
    =\int_0^x\left(1-(1-F(y))-\frac{1}{2}\right)dy \notag \\
    &=\int_0^x \left(\frac{1}{2}-(1-F(y))\right)dy
    =\frac{x}{2}-\int_0^x(1-F(y))dy \notag \\
    &=\frac{x}{2}-\int_0^{\infty}(1-F(y))dy+
    \int_x^{\infty}(1-F(y))dy \notag \\
    &=\frac{x}{2}-\E \ell +R(x). \notag
\end{align}
This proves the lemma.
\end{proof}

Putting \eqref{3.1}, \eqref{3.2} and \eqref{3.3} together, and using the independence of $\ell$ and $X$, we now have $V_*(x)=2 V(x)+\E \ell$, where
\begin{equation}\label{3.5}
    V(x)=\inf_{0 \leq \tau \leq T} \E_{x} G(X_{\tau}).
\end{equation}
Since $\E \ell$ is a constant, if we find the stopping time $\tau_*$ such that the infimum of \eqref{3.5} is attained, then the infimum of \eqref{3.1} is also attained. In other words, solving the optimal prediction problem \eqref{3.1} is equivalent to solving the optimal stopping problem \eqref{3.5}, which means we have reduced our problem.

By \eqref{3.3}, we see that the derivative of $G$, $G'(s)=F(s)-1/2$, is increasing on $(0,\infty)$, implying that $G(s)$ is convex. The solution is naturally related to the concept of the medians of $\ell$. All medians form a closed interval $[m,M]$ where $m$ and $M$ denote the lowest median and the highest median respectively, with uniqueness occurring if and only if $m=M$. It is clear that $G'=F-1/2<0$ on $(0,m)$, $G'=F-1/2>0$ on $(M,\infty)$ and $G'=0$ on $[m,M]$. This implies that $G$ is strictly decreasing on $(0,m)$, constant on $[m,M]$ and strictly increasing on $(M,\infty)$. To allow the price process to start from any state at any time in $[0,T]\times(0,\infty)$, let us extend \eqref{3.5} to the following form:
\begin{equation} \label{3.6}
    V(t,x)=\inf_{0 \leq \tau \leq T-t} \E_{t,x}G(X_{t+\tau}).
\end{equation}
Here $\E_{t,x}$ is the expectation taken under the measure $\P_{t,x}$ under which $\P_{t,x}(X_t=x)=1$ for any $(t,x)\in[0,T]\times(0,\infty)$.

By general optimal stopping theory (see e.g. \cite{optimal_stopping_and_free_boundary_problem}), define our continuation set $C$ and stopping set $D$ respectively by
\begin{equation}\label{3.7}
    C=\left\{(t,x)\in [0,T)\times(0,\infty)\mid V(t,x)<G(x) \right\}
\end{equation}
and
\begin{equation}\label{3.8}
    D=\left\{(t,x)\in [0,T]\times(0,\infty)\mid V(t,x)=G(x) \right\}.
\end{equation}
The first entry time of $X$ into the stopping set $D$, defined by
\begin{equation}\label{3.9}
    \tau_D=\inf\left\{u \in [0, T-t]\mid (t+u,X_{t+u})\in D \right\},
\end{equation}
is optimal in problem \eqref{3.6}, hence also optimal in problem \eqref{3.1}. Indeed, to see this, note that minimising $\E_{t,x}G(X_{t+\tau})$ is equivalent to minimising the non-negative loss $\E_{t,x}[G(X_{t+\tau})+\E\ell]$ (recall $G\ge-\E\ell$ by \eqref{3.4}), and the value process of this problem is sandwiched between $0$ and the uniformly integrable martingale $\E_{t,x}[G(X_T)+\E\ell\mid\mathcal F_{t+u}]$, obtained by stopping at the terminal time; this sandwich supplies the uniform integrability required in the standard proof of the optimality of the first entry time (see \cite{optimal_stopping_and_free_boundary_problem}, Chapter I, Section 2). We remark that the two-sided condition $\E_{t,x}[\sup_{0\le s\le T-t}|G(X_{t+s})|]<\infty$, under which such results are usually stated, fails here for $t<T$: since $G(x)\ge x/2-\E\ell$, it would force $\E_{t,x}[\sup_{0\le s\le T-t}X_{t+s}]<\infty$, and then dominated convergence would make the discounted price a true martingale, contradicting its strict local martingale property.

We will now introduce the \textit{admissible aspiration level laws} and auxiliary functions, which are useful for the later analysis.

\begin{definition}[\textbf{Admissible Aspiration Level Laws}] \label{adm law}
For $\mu>0$, $\sigma>0$ and $\beta>0$, we let $\mathcal A_{\mathrm{CEV}}(\mu,\sigma,\beta)$ denote the family of continuous, piecewise $C^1$ (i.e., $C^1$ off a locally finite set) probability distribution functions $F$ on $\mathbb R$, with locally bounded density $F'$ and $F(0)=0$, for which there exists $\alpha\in(0,m)$ such that
\begin{equation}\label{3.10}
    x^{2\beta+1}F'(x)
    <
    \frac{\mu}{\sigma^2/2}
    \left(
    \frac{1}{2}-F(x)
    \right)
    \quad \text{for } x\in(0,\alpha),
\end{equation}
and
\begin{equation}\label{3.11}
    x^{2\beta+1}F'(x)
    >
    \frac{\mu}{\sigma^2/2}
    \left(
    \frac{1}{2}-F(x)
    \right)
    \quad \text{for } x\in(\alpha,m),
\end{equation}
where $m$ is the lowest median of $F$.

For $\mu<0$, $\sigma>0$ and $\beta>0$, we let $\mathcal A_{\mathrm{CEV}}(\mu,\sigma,\beta)$ denote the family of continuous, piecewise $C^1$ (i.e., $C^1$ off a locally finite set) probability distribution functions $F$ on $\mathbb R$, with locally bounded density $F'$ and $F(0)=0$, for which there exists $\gamma\in(M,\infty)$ such that
\begin{equation}\label{3.12}
    x^{2\beta+1}F'(x)
    >
    \frac{\mu}{\sigma^2/2}
    \left(
    \frac{1}{2}-F(x)
    \right)
    \quad \text{for } x\in(M,\gamma),
\end{equation}
and
\begin{equation}\label{3.13}
    x^{2\beta+1}F'(x)
    <
    \frac{\mu}{\sigma^2/2}
    \left(
    \frac{1}{2}-F(x)
    \right)
    \quad \text{for } x\in(\gamma,\infty),
\end{equation}
where $M$ is the highest median of $F$.
\end{definition}

We will make use of the following auxiliary functions:
\begin{align}
J(t,x) &= \E_{t,x}G(X_T)=\mathbb{E}_x\left[G\left(X_{T-t}\right)\right]
       = \int_0^\infty G(z)\,p^{(\mu)}(x,z;T-t)\,dz, \label{3.14}\\
H(x) &= \mu x\left(F(x)-\frac{1}{2}\right)
       + \frac{\sigma^2}{2}x^{2\beta+2}F'(x), \label{3.15}\\
L(s,x,y)&=\mathbb{E}_x\left[H(X_s)\1_{\{X_s<y\}}\right]
=\int_0^y H(u)p^{(\mu)}(x,u;s)\,du, \label{3.16} \\
M(s,x,y,z) &= \mathbb{E}_x\left[H(X_s)\1_{\{y<X_s<z\}}\right]
       = \int_y^z H(u)\,p^{(\mu)}(x,u;s)\,du. \label{3.17}
\end{align}

A natural intuition for the stopping set is that once the process starts in it, it is optimal to stop immediately, i.e., waiting will not give a better result. The continuation set has the opposite meaning. Based on this idea, we compare $\E_{t,x}G(X_{t+s})$ and $\E_{t,x}G(X_t)=G(x)$ for $s\in(0,T-t]$, with equality at $s=0$. We first treat the case $\mu>0$, and then the case $\mu<0$. For the positive drift case we have the following analysis.

\begin{lemma} \label{lem 3.2}
Let $X$ be defined as in Definition \ref{def 2.1} with strictly positive $\mu$ and $\beta$. Then the following identity holds:
\begin{equation}\label{3.18}
    \lim_{x \rightarrow \infty} \E_{t,x} X_{t+s}=\frac{e^{\mu s}z_s^{\nu}}{\Gamma(\nu+1)},
\end{equation}
for any given and fixed $t\in[0,T)$ and $s\in(0,T-t]$, where $z_s=\mu e^{-2\mu \beta s}/(\sigma^2 \beta (1-e^{-2\mu \beta s}))$ and $\nu=1/(2\beta)$.
\end{lemma}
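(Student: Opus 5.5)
The plan is to read off the limit directly from the explicit formula \eqref{2.5} for the mean of a CEV process with positive elasticity. First, by time-homogeneity of the SDE \eqref{2.1} (the coefficients do not depend on time), the law of $X_{t+s}$ under $\P_{t,x}$ equals the law of $X_s$ under $\P_x$. Hence $\E_{t,x}X_{t+s}=\E_x X_s$, and it suffices to compute $\lim_{x\to\infty}\E_x X_s$ for fixed $s\in(0,T-t]$. By \eqref{2.5},
\begin{equation*}
\E_x X_s = x e^{\mu s}\bigl(1-Q(\nu,w_x)\bigr),\qquad w_x:=\frac{\mu x^{-2\beta}}{\sigma^2\beta(e^{2\mu\beta s}-1)} .
\end{equation*}
Since $\mu,\beta>0$ and $s>0$, the denominator is positive and finite, so $w_x\to0$ as $x\to\infty$.

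Next I rewrite $1-Q(\nu,w)$ as the normalised lower incomplete gamma function, using $\int_0^\infty u^{\nu-1}e^{-u}\,du=\Gamma(\nu)$ and \eqref{2.6}:
\begin{equation*}
1-Q(\nu,w)=\frac{1}{\Gamma(\nu)}\int_0^w u^{\nu-1}e^{-u}\,du .
\end{equation*}
The only analytic step is the small-$w$ asymptotics of this integral. For $u\in[0,w]$ we have $e^{-w}\le e^{-u}\le 1$, which gives the elementary sandwich
\begin{equation*}
e^{-w}\frac{w^\nu}{\nu}\le\int_0^w u^{\nu-1}e^{-u}\,du\le\frac{w^\nu}{\nu}.
\end{equation*}
Therefore $1-Q(\nu,w)=\frac{w^\nu}{\nu\Gamma(\nu)}(1+o(1))=\frac{w^\nu}{\Gamma(\nu+1)}(1+o(1))$ as $w\downarrow0$. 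I expect this to be the main point requiring care, namely a rigorous asymptotic rather than a formal expansion, and the sandwich disposes of it.

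Finally, I combine the two factors. Because $\nu=1/(2\beta)$, we have $x\cdot x^{-2\beta\nu}=1$, so
\begin{equation*}
x\,w_x^\nu=\Bigl(\frac{\mu}{\sigma^2\beta(e^{2\mu\beta s}-1)}\Bigr)^{\nu},
\end{equation*}
and this is independent of $x$. Multiplying numerator and denominator by $e^{-2\mu\beta s}$ shows that
\begin{equation*}
\frac{\mu}{\sigma^2\beta(e^{2\mu\beta s}-1)}=\frac{\mu e^{-2\mu\beta s}}{\sigma^2\beta(1-e^{-2\mu\beta s})}=z_s .
\end{equation*}
Hence $\E_xX_s=e^{\mu s}z_s^\nu(1+o(1))/\Gamma(\nu+1)$, and letting $x\to\infty$ yields \eqref{3.18}. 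As a side remark, the limit is finite, which is consistent with infinity being an entrance boundary for $\beta>0$.
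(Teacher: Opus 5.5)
Your proof is correct and follows the paper's argument. Both start from the explicit mean formula \eqref{2.5}, note that the incomplete-gamma argument $z_s x^{-2\beta}$ tends to $0$, and combine the small-argument behaviour $1-Q(\nu,w)\sim w^\nu/\Gamma(\nu+1)$ with $x\cdot(x^{-2\beta})^{\nu}=1$. The one difference is that you prove this asymptotic with the elementary bound $e^{-w}w^\nu/\nu\le\int_0^w u^{\nu-1}e^{-u}\,du\le w^\nu/\nu$, whereas the paper cites the representation $P(\nu,w)=w^\nu\gamma^*(\nu,w)$ from the NIST handbook; your version makes that step self-contained.
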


\begin{proof}
Let $z_s=\mu e^{-2\mu \beta s}/(\sigma^2 \beta (1-e^{-2\mu \beta s}))$. Then \eqref{2.5} becomes
\begin{equation}\label{3.19}
\mathbb{E}_{t,x}[X_{t+s}]
= x e^{\mu s}
\left(
1 - Q\!\left(
\nu,
z_sx^{-2\beta}
\right)
\right).
\end{equation}
Moreover, $z_sx^{-2\beta} \rightarrow 0$ as $x \rightarrow \infty$ since we only consider positive elasticity constants here. Hence, using expressions in \cite{NIST_handbook_of_mathematical_functions}, we have
\begin{equation}\label{3.20}
    1-Q(\nu,z_s x^{-2\beta})=P(\nu,z_s x^{-2\beta})=(z_s x^{-2\beta})^{\nu}\gamma^*(\nu,z_s x^{-2\beta}) \sim \frac{(z_s x^{-2\beta})^{\nu}}{\Gamma(\nu+1)}
\end{equation}
as $x \rightarrow \infty$. Here \(P(\nu,x)=1-Q(\nu,x)\) denotes the regularised lower incomplete gamma function, and \(\gamma^*(\nu,x)\) denotes the corresponding normalised lower gamma function. Hence, for any fixed $t\geq 0$ and $s\in(0,T-t]$,
\begin{equation}\label{3.21}
\E_{t,x}X_{t+s}=xe^{\mu s} P(\nu, z_s x^{-2\beta})\sim \frac{xe^{\mu s}(z_s x^{-2\beta})^{\frac{1}{2\beta}}}{\Gamma(\nu+1)}=\frac{e^{\mu s} z_s^{\nu}} {\Gamma(\nu+1)}
\end{equation}
as $x\rightarrow\infty$. This proves the lemma. One can also use L'H\^opital's rule to prove it.
\end{proof}

For any given and fixed $s\in(0,T-t]$, Lemma \ref{lem 3.2} implies that there exists some $K_s<\infty$ such that for all $x>K_s$, we have $\E_{t,x}(X_{t+s})<x-2\E \ell$. We can also see that if the CEV process starts at a large enough state, then the expected loss in the future is less than the beginning loss at some future time $s\in(0,T-t]$. Indeed, by \eqref{3.4} and the fact that $0 \leq R(x) \leq \E \ell$ for all $x>0$, we have
\begin{equation}\label{3.22}
\begin{aligned}
    \E_{t,x}G(X_{t+s})&=\frac{1}{2}\E_{t,x}(X_{t+s})-\E\ell+\E_{t,x}R(X_{t+s})\\
    &\leq \frac{1}{2}\E_{t,x}(X_{t+s}) <\frac{1}{2}x-\E \ell \leq G(x)
\end{aligned}
\end{equation}
for some given and fixed $s\in(0,T-t]$ and $x>K_s$. Here $K_s$ is only a sufficiently large threshold depending on the chosen time length $s$, obtained from the limiting estimate, and not the optimal stopping boundary. Since all deterministic times are stopping times, we now have $V(t,x)\leq \E_{t,x}G(X_{t+s})<G(x)$. This means the action of waiting is better than stopping immediately. This result suggests that the sufficiently high price side should belong to the continuation set $C$ under the case of strictly positive drift and elasticity constant.

We claim that $[0,T]\times[m,M]\subseteq D$ and $[0,T)\times(0,\alpha)\subseteq C$, where $\alpha$ is the point appearing in Definition \ref{adm law}. Indeed, by the shape of $G$, the function $G$ is strictly decreasing on $(0,m)$, constant on $[m,M]$, and strictly increasing on $(M,\infty)$. Hence, for any fixed $x\in[m,M]$, we have $G(y)\geq G(x)$ for all $y>0$. Therefore, for any stopping time $\tau$ of $X$, it follows that $\E_{t,x}G(X_{t+\tau})\geq G(x)$. Consequently, by the definition of $V(t,x)$, we obtain $V(t,x)=\inf_{0 \leq \tau \leq T-t} \E_{t,x}G(X_{t+\tau}) \geq G(x)$. On the other hand, by taking $\tau=0$, we have $V(t,x)\leq G(x)$. Combining the above inequalities, we conclude that $V(t,x)=G(x)$ for any $(t,x)\in[0,T]\times[m,M]$, which means $[0,T]\times[m,M]\subseteq D$.

It remains to prove that $[0,T)\times(0,\alpha)\subseteq C$. Define the stopping time $\sigma_{\alpha}:= \inf \{u\geq 0\mid X_{t+u} \geq \alpha \} \wedge (T-t)$. By the continuity of $u\mapsto X_{t+u}$, we always have $0<X_{t+u}\leq \alpha$ for any $u\in[0,\sigma_{\alpha}]$ and $x\in(0,\alpha]$. Applying It\^o--Tanaka's formula and replacing $u$ by $\sigma_{\alpha}$, we have
\begin{equation}\label{3.23}
    G(X_{t+\sigma_{\alpha}})=G(x)+\int_0^{\sigma_{\alpha}}H(X_{t+u})du+
    \int_0^{\sigma_{\alpha}}\sigma X_{t+u}^{\beta+1}G'(X_{t+u})dB_{t+u}.
\end{equation}
Taking $\E_{t,x}$ on both sides gives
\begin{equation}\label{3.24}
\begin{aligned}
    \E_{t,x}G(X_{t+\sigma_{\alpha}})&=G(x)+\E_{t,x}\int_0^{\sigma_{\alpha}}H(X_{t+u})du\\
    &=G(x)+\E_{t,x}\int_0^{\infty}H(X_{t+u})\1_{\{ u \leq \sigma_{\alpha} \} }du.
\end{aligned}
\end{equation}
Here $M_{\sigma_{\alpha}}:=\int_0^{\sigma_{\alpha}}\sigma X_{t+u}^{\beta+1}G'(X_{t+u})dB_{t+u}$ vanishes after taking expectation because
\begin{equation}\label{3.25}
\E_{t,x}\left\langle M, M\right\rangle_{\sigma_{\alpha}}
=\E_{t,x} \int_0^{\sigma_{\alpha}} \sigma^2 X_{t+u}^{2\beta+2}(G'(X_{t+u}))^2du
\leq \frac{\sigma^2 \alpha^{2\beta+2}} {4}(T-t)<\infty.
\end{equation}
By Definition \ref{adm law} we know that $H(x)<0$ when $x\in(0,\alpha)$. Therefore, using \eqref{3.24}, we can conclude that $V(t,x)\leq \E_{t,x}G(X_{t+\sigma_{\alpha}})<G(x)$. Thus, for all $(t,x)\in[0,T)\times(0,\alpha)$, it is better to wait, which is equivalent to saying that $[0,T)\times(0,\alpha)\subseteq C$.

For the case $\mu<0$, the logic is simpler. The process $X$ satisfies \eqref{2.1}. Therefore, for any bounded stopping time $\tau\leq T-t$, we can write
\begin{equation}\label{3.26}
X_{t+\tau}
=
x+\int_0^{\tau} \mu X_{t+s}\,ds
+
\int_0^{\tau} \sigma X_{t+s}^{\beta+1}\,dB_{t+s} .
\end{equation}
Since $\mu<0$ and $X_t>0$, the finite variation part $\int_0^{\tau} \mu X_{t+s}\,ds$ is decreasing. The stochastic integral part $\int_0^{\tau} \sigma X_{t+s}^{\beta+1}\,dB_{t+s}$ is a local martingale. Hence $X$ is a non-negative local supermartingale, and therefore it is a supermartingale. Thus $\mathbb{E}_{t,x} X_{t+\tau}\leq x$.

We now use the shape of the loss function $G$. Recall that $G$ is decreasing on $(0,m)$, attains its minimum on $[m,M]$, and is convex. Let $x\in(0,m)$ and $t\in[0,T]$. As $G$ is decreasing on $(0,m)$ and using Jensen's inequality, we have
\begin{equation}\label{3.27}
G(x)\leq G\left(\mathbb{E}_{t,x}X_{t+\tau}\right)\leq
\mathbb{E}_{t,x}G(X_{t+\tau}).
\end{equation}
Hence immediate stopping is optimal for $x\in(0,m)$. For $x\in[m,M]$ and $t\in[0,T]$, the function $G$ attains its minimum on $[m,M]$. Hence, for all $y>0$, $G(x)\leq G(y)$. Based on this fact, we have $G(x)\leq \mathbb{E}_{t,x}G(X_{t+\tau})$. Thus immediate stopping is also optimal for $x\in[m,M]$. Therefore, when $\mu<0$, $[0,T]\times(0,M]\subseteq D$.

The preceding analysis completes the formulation of the problem. We have reduced the original optimal prediction problem to the optimal stopping problem \eqref{3.6}, introduced the continuation and stopping sets, and obtained preliminary information on their structure from the median interval of $\ell$ and the sign of $H=\mathbb{L}_XG$. These facts will serve as the starting point for the free-boundary analysis below.

\section{Solution to the problem}\label{sec:solution}

In this section we state the main result for the reduced optimal stopping problem. The theorem gives the characterization of the value function and the optimal stopping boundary for both cases $\mu>0$ and $\mu<0$. The proof, however, will be presented only for the case $\mu>0$ with $\beta>0$. This is the more delicate regime, since the positive-elasticity CEV process exhibits strict local martingale behaviour and its high-state behaviour differs from the geometric Brownian motion case. The case $\mu<0$ with $\beta>0$ will be stated without a separate proof. In that case the CEV process is a non-negative supermartingale, which gives stronger estimates, and the corresponding arguments follow by analogous and simpler modifications.

\begin{theorem}\label{thm:main}
Consider the problems \eqref{3.1} and \eqref{3.6}, where $X$ is the CEV process solving \eqref{2.1} with $\beta>0$, and $\ell\geq 0$ satisfying $\mathbb{E}\ell<\infty$ is independent of $X$. Suppose that the distribution function $F$ of $\ell$ belongs to the admissible class $\mathcal A_{\mathrm{CEV}}(\mu,\sigma,\beta)$ for $\mu\neq 0$.

If $\mu>0$, then the stopping set in \eqref{3.6} is given by $D=\{(t,x)\in[0,T)\times(0,\infty)\mid a(t)\leq x\leq c(t)\}\cup(\{T\}\times(0,\infty))$, where the optimal stopping boundaries $a,c:[0,T)\to(0,\infty)$ can be characterised as the unique continuous solution to the coupled nonlinear integral equations
\begin{equation}
\label{4.1}
J(t,a(t))=G(a(t))+\int_t^T M(s-t,a(t),a(s),c(s))\,ds,
\end{equation}
and
\begin{equation}
\label{4.2}
J(t,c(t))=G(c(t))+\int_t^T M(s-t,c(t),a(s),c(s))\,ds,
\end{equation}
satisfying $\alpha\leq a(t)\leq m\leq M\leq c(t)<\infty$ for $t\in[0,T)$, with $a(T-)=\alpha$ and $c(T-)=\infty$. The stopping time
\begin{equation}
\label{4.3}
\tau_{a,c}=\inf\{t\in[0,T)\mid a(t)\leq X_t\leq c(t)\} \wedge T
\end{equation}
is optimal in \eqref{3.1}. The value function $V$ from \eqref{3.6} admits the representation
\begin{equation}
\label{4.4}
V(t,x)=J(t,x)-\int_t^T M(s-t,x,a(s),c(s))\,ds
\end{equation}
for $(t,x)\in[0,T]\times(0,\infty)$. Consequently, the value $V_*(x)$ from \eqref{3.1} is given by $V_*(x)=2V(0,x)+\mathbb{E}\ell$ for $x>0$.

If $\mu<0$, then the stopping set in \eqref{3.6} is given by $D=\{(t,x)\in[0,T)\times(0,\infty)\mid x\leq b(t)\}\cup(\{T\}\times(0,\infty))$, where the optimal stopping boundary $b:[0,T)\to(0,\infty)$ can be characterised as the unique continuous increasing solution to the nonlinear integral equation
\begin{equation}
\label{4.5}
J(t,b(t))=G(b(t))+\int_t^T L(s-t,b(t),b(s))\,ds,
\end{equation}
satisfying $M\leq b(t)\leq \gamma$ for $t\in[0,T)$, with $b(T-)=\gamma$. The stopping time
\begin{equation}
\label{4.6}
\tau_b=\inf\{t\in[0,T)\mid X_t\leq b(t)\} \wedge T
\end{equation}
is optimal in \eqref{3.1}. The value function $V$ from \eqref{3.6} admits the representation
\begin{equation}
\label{4.7}
V(t,x)=J(t,x)-\int_t^T L(s-t,x,b(s))\,ds
\end{equation}
for $(t,x)\in[0,T]\times(0,\infty)$. Consequently, the value $V_*(x)$ from \eqref{3.1} is given by $V_*(x)=2 V(0,x)+\mathbb{E}\ell$ for $x>0$.
\end{theorem}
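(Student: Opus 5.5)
We follow the standard free-boundary programme of De Angelis and Peskir, adapted to the strict local martingale setting, and treat only $\mu>0$. First we establish basic regularity of $V$. Continuity of $V$ on $[0,T]\times(0,\infty)$ follows from the continuity of $x\mapsto X^x$ (pathwise uniqueness and the flow property from the Bessel representation \eqref{2.2}) and from $t\mapsto V(t,x)$ being increasing, since the horizon shrinks. A one-sided time-Lipschitz bound comes from comparing $\tau$ with $\tau\wedge(T-t-h)$. The uniform integrability substitute is the sandwich $0\le V+\E\ell\le \E_{t,x}[G(X_T)]+\E\ell$ noted after \eqref{3.9}. Hence $C$ is open, $D$ is closed, $\tau_D$ is optimal, and $V$ is $C^{1,2}$ in $C$ and solves $V_t+\mathbb L_X V=0$ there by standard Dirichlet-problem arguments. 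Since the horizon shrinks with $t$, $D$ increases in time: if $(t,x)\in D$ and $s>t$, then $(s,x)\in D$.

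Next we determine the shape of $D$. Write $V=G+W$ with $W(t,x)=\inf_\tau \E_{t,x}\int_0^\tau H(X_{t+u})\,du$, which is justified by localisation as in \eqref{3.23}. Definition \ref{adm law} gives $H<0$ on $(0,\alpha)$ and $H>0$ on $(\alpha,m)$. The set $(M,\infty)$ carries $G'>0$, and a direct computation shows $H>0$ there as well, since $F'\ge 0$ and $F>1/2$. Hence $H>0$ on $(\alpha,\infty)\setminus[m,M]$ and $H=0$ on $[m,M]$.

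On the lower side, suppose $(t,x)\in D$ with $x<m$ and $y\in(x,m]$. A path from $y$ that exits $(x,\cdot)$ downwards must cross $x$, and the region $(x,y)$ carries $H>0$. Using this, we show that every continuation path in $(x,y)$ is dominated by stopping. This gives $D\cap([0,T)\times(0,m])=\{a(t)\le x\le m\}$ with $\alpha\le a(t)$ by the previous section.

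On the upper side, we use Lemma \ref{lem 3.2} and \eqref{3.22}: for fixed $t$ large $x$ lie in $C$. For monotonicity in $x$ above $M$, we aim to show that $x\mapsto V(t,x)-G(x)$ is decreasing on $(M,\infty)$. The plan is to compare the stopping time optimal from $x'$ with the process started from $x<x'$ via pathwise ordering of solutions, using convexity of $G$ and $G'\le 1/2$. This yields $c(t)=\inf\{x\ge M:(t,x)\in C\}<\infty$ and $D=\{a(t)\le x\le c(t)\}$. By time monotonicity of $D$, $a$ is decreasing and $c$ is increasing.

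For the terminal limits, $a(T-)=\alpha$ follows because near $T$ the local sign of $H$ decides: if $a(T-)>\alpha$, then a point in $(\alpha,a(T-))$ close to maturity would have $V<G$ while $H>0$ nearby. A standard local-time argument gives the contradiction $V-G\gtrsim$ positive. Similarly $c(T-)=\infty$, because for each fixed $x>M$ one has $\E_{t,x}G(X_{t+s})\ge G(x)+\int_0^s\E H\,du$ with error $o(s)$ as $s\downarrow0$, hence $(t,x)\in D$ for $t$ near $T$.

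We then establish smooth fit $V_x=G'$ at $a(t)$ and $c(t)$ by the usual difference-quotient argument from \cite{optimal_stopping_and_free_boundary_problem}, using that $X$ is regular and the diffusion coefficient is smooth. Continuity of $a$ and $c$ follows from smooth fit together with $V_t+\mathbb L_XV=0$ and $H>0$ in the band, via the standard contradiction argument on a would-be discontinuity. Applying the local time-space formula \cite{A_Change-of-Variable_Formula_with_Local_Time_on_Curves} to $V(t+s,X_{t+s})$ then gives
\begin{equation*}
V(t+s,X_{t+s})=V(t,x)+\int_0^s H(X_{t+u})\1_{\{a(t+u)<X_{t+u}<c(t+u)\}}\,du+N_s .
\end{equation*}
No local-time terms appear because of smooth fit. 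Setting $s=T-t$ and taking expectations yields \eqref{4.4}, and inserting $x=a(t)$ and $x=c(t)$ yields \eqref{4.1} and \eqref{4.2}.

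\textbf{Main obstacle.} The main obstacle is that $N$ need not be a true martingale, because $X$ is a strict local martingale and the sandwich bound is only one-sided. We plan to localise with $\rho_n=\inf\{u:X_{t+u}\ge n\}$. Taking expectations then gives an extra term $\E_{t,x}[V(\cdot)\1_{\{\rho_n<T-t\}}]$, and we will control it using $|V|\le |G|+C\le n/2+C$ together with the fact that $n\,\P(\rho_n<T-t)\to0$. The latter is the tail estimate that fails only at the level of $\E\sup X$; we would prove it from the scale function of CEV, for which infinity is an entrance boundary.

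For uniqueness, we follow Peskir's four-step scheme. Given a continuous solution $(\tilde a,\tilde c)$, define $U$ by the right-hand side of \eqref{4.4}, and show that $U=G$ on $\tilde D$ using the integral equations and the strong Markov property. Then we show $U\ge V$, and deduce $\tilde a\le a$ and $\tilde c\ge c$ and the reverse inclusions from the sign of $H$ on the bands. This again requires the localisation estimate above in place of the martingale property. The case $\mu<0$ follows the same scheme, using the supermartingale property, with $L$ in place of $M$ and $H<0$ above $\gamma$.
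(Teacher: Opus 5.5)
Your programme follows the paper's: continuity of $V$, shape of $D$, continuity of the boundaries, smooth fit, the local time--space formula, and Peskir's uniqueness scheme. The gap is in how you handle the strict local martingale, which rests on a false estimate.

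\textbf{The tail estimate is false.} The claim $n\,\P_{t,x}(\rho_n<T-t)\to0$ for $\rho_n=\inf\{u:X_{t+u}\ge n\}$ fails precisely because the discounted price is a strict local martingale. The process $e^{-\mu(u\wedge\rho_n)}X_{t+u\wedge\rho_n}$ is a local martingale bounded by $n$, hence a true martingale, so
\begin{equation*}
x=\E_{t,x}\big[e^{-\mu s}X_{t+s}\1_{\{\rho_n>s\}}\big]+n\,\E_{t,x}\big[e^{-\mu\rho_n}\1_{\{\rho_n\le s\}}\big].
\end{equation*}
Letting $n\to\infty$ gives $\liminf_n n\,\P_{t,x}(\rho_n\le s)\ge x-e^{-\mu s}\E_{t,x}X_{t+s}>0$. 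So the bound $|V|\le n/2+C$ cannot remove the boundary term, neither in the representation step nor in the uniqueness step.

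\textbf{The decomposition $V=G+W$ is false.} The identity $\E_{t,x}G(X_{t+\tau})=G(x)+\E_{t,x}\int_0^\tau H(X_{t+u})\,du$ holds only when $X$ is confined to a bounded interval up to $\tau$, as in \eqref{3.23}. In general the right-hand side exceeds the left by $\lim_n G(n)\P_{t,x}(\rho_n\le\tau)$, which is strictly positive for deterministic $\tau$. This defect is exactly what puts large $x$ into $C$ although $H>0$ on $(M,\infty)$. Indeed $G+W\ge G-\mu\alpha(T-t)/2$, whereas $V\le J\le\frac12\E_{t,x}X_T$ is bounded in $x$ by Lemma \ref{lem 3.2}, so $V-G\to-\infty$.

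\textbf{The repair.} The paper uses an ingredient you already wrote down and then dismissed as one-sided: $-\E\ell\le V(t+\rho_n,X_{t+\rho_n})\le\E_{t,x}[G(X_T)\mid\mathcal F_{t+\rho_n}]$. The right-hand side is a uniformly integrable family, because $G(X_T)\in L^1$ by the supermartingale bound $\E_{t,x}X_T\le e^{\mu(T-t)}x$. Combine this with $H\1_{\{a<x<c\}}\ge0$ (the band lies in $[\alpha,\infty)$) and monotone convergence to obtain \eqref{4.4}.

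For uniqueness no tail estimate is needed. Define the candidate as the conditional expectation of the integrable variable $G(X_T)-\int_0^{T-t}H(X_{t+u})\1_{\{\tilde a(t+u)<X_{t+u}<\tilde c(t+u)\}}\,du$; it is a uniformly integrable martingale by construction. Integrability of $\int_0^{T-t}H^+(X_{t+u})\,du$ follows from localisation and the same supermartingale inequality, i.e.\ from the harmless direction of the defect.

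\textbf{Two further steps do not work as sketched.}
\begin{itemize}
\item \emph{Upper side of the stopping set.} Running the optimal time from $x'$ on $X^{x}$ bounds $V(t,x)$ from above, which is the wrong direction for showing that $x\mapsto V(t,x)-G(x)$ decreases. The other comparison would need $\E[G(X^{x'}_{t+\tau})-G(X^{x}_{t+\tau})]\le G(x')-G(x)$, which convexity and $G'\le\frac12$ do not give without the scaling of GBM. Use instead the trapping argument you indicated for the lower side, applied to any two points $x_1<x_2$ of $D_t$. Since $D_t\subseteq D_s$ for $s>t$ and $[m,M]\subseteq D_s$, the process started from a continuation point in between stays in $(x_1,x_2)\setminus[m,M]$ before $\tau_D$. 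There $H>0$ and the stochastic integral is a true martingale, which contradicts the optimality of $\tau_D$. This gives connectedness of $D_t$, hence the band.
\item \emph{The limit $c(T-)=\infty$.} Your argument compares only with deterministic times, which does not show $(t,x)\in D$. The paper instead assumes $c(T-)<\infty$ and takes a test function supported above $c(T-)$. It then uses $V_t\ge0$ in $C$, $V(T,\cdot)=G$ and $H>0$ there to reach a contradiction.
\end{itemize}
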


\begin{proof}
The proof follows the methods explained in \cite{optimal_stopping_and_free_boundary_problem} and will be presented in several steps. As mentioned above, we will only focus on the case of positive drift and positive elasticity coefficient, and give the proof of this case in full detail. The case $\mu<0$ can be proved by analogous and simpler modifications because of the supermartingale property. Hence we assume throughout that $\mu>0$ is given and fixed.

1. We show that $(t,x)\mapsto V(t,x)$ is continuous on $[0,T]\times(0,\infty)$. For this, we first prove that $x\mapsto V(t,x)$ is continuous uniformly for all $t\in[0,T]$, and then show that $t\mapsto V(t,x)$ is continuous. Choosing $x$ and $y$ arbitrarily, and using $|G(u)-G(v)|\leq |u-v|/2$, we have
\begin{align} \label{4.8}
|V(t,y)-V(t,x)|&=\left|\inf_{0 \leq \tau \leq T-t}\E G(X_{t+\tau}^{t,y})-
\inf_{0 \leq \tau \leq T-t}\E G(X_{t+\tau}^{t,x})\right| \notag \\
&\leq \sup_{0 \leq \tau \leq T-t}\left|\E G(X_{t+\tau}^{t,y})-
\E G(X_{t+\tau}^{t,x})\right| \notag \\
&\leq \sup_{0 \leq \tau \leq T-t}\E \left|G(X_{t+\tau}^{t,y})- G(X_{t+\tau}^{t,x})\right| \notag \\
&\leq \sup_{0 \leq \tau \leq T-t} \frac{1}{2}\E\left|X_{t+\tau}^{t,y}-X_{t+\tau}^{t,x}\right|
\leq \frac{1}{2} e^{\mu T}|y-x| \rightarrow 0
\end{align}
as $|y-x|\rightarrow0$. The last inequality is obtained as follows. Define $D_s=X_{t+s}^{t,y}-X_{t+s}^{t,x}$. Without loss of generality, assume $y\geq x$. By pathwise uniqueness and the comparison property for the CEV SDE, we have $X_{t+s}^{t,y} \geq X_{t+s}^{t,x}$ and therefore $D_s\geq0$. Moreover,
\begin{equation}\label{4.9}
    dD_s=\mu D_s\,ds+\sigma\left[(X_{t+s}^{t,y})^{\beta+1}-(X_{t+s}^{t,x})^{\beta+1}\right]dB_{t+s},
\end{equation}
and $\widetilde{D}_s:=e^{-\mu s}D_s$ is a non-negative continuous local martingale, hence a supermartingale. Finally, the optional sampling theorem gives
\begin{equation}\label{4.10}
    \sup_{0 \leq \tau \leq T-t}\E D_{\tau}
    =\sup_{0 \leq \tau \leq T-t}\E e^{\mu \tau}\widetilde{D}_{\tau}
    \leq e^{\mu T} \E \widetilde{D}_0
    \leq e^{\mu T}|x-y|.
\end{equation}

We then show that $t\mapsto V(t,x)$ is continuous. Note that $V(t,x)$ is increasing in $t$. We claim that, for any $h\in[0,T-t]$, $t\in[0,T)$ and $\delta>0$,
\begin{equation}\label{4.11}
0 \leq V(t+h,x)-V(t,x) \leq \frac{1}{2}e^{\mu T} \delta +(\E\ell+G(x))\P_{t,x} \left(\sup_{0 \leq s \leq h} |X_{t+s}^{t,x}-x|>\delta \right),
\end{equation}
and, for any $h\in[0,t]$, $t\in(0,T]$ and $\delta>0$,
\begin{equation}\label{4.12}
\begin{aligned}
0 \leq V(t,x)-V(t-h,x) \leq{}& \frac{1}{2}e^{\mu T} \delta\\
&+(\E\ell+G(x))\P_{t-h,x} \left(\sup_{0 \leq s \leq h} |X_{t-h+s}^{t-h,x}-x|>\delta\right).
\end{aligned}
\end{equation}
We only give the detailed proof of \eqref{4.11}, since \eqref{4.12} can be derived in the same way. Since $\{\tau\geq h\}\in \mathcal{F}_{t+h}$, and in this case $\tau-h$ is admissible after $t+h$, the strong Markov property together with the definition of $V$ gives
\begin{equation}\label{4.13}
    \E\left[G(X_{t+h+(\tau-h)}^{t,x}) \mid \mathcal{F}_{t+h} \right]
    \geq V\left(t+h,X_{t+h}^{t,x}\right).
\end{equation}
Using \eqref{4.8}, we get
\begin{align} \label{4.14}
\E G(X_{t+\tau}^{t,x})
&=\E G(X_{t+\tau}^{t,x})\1_{\{\tau <h\}}+
\E G(X_{t+\tau}^{t,x})\1_{\{\tau \geq h\}} \notag \\
&= \E G(X_{t+\tau}^{t,x})\1_{\{\tau <h\}} +
\E \left(\1_{\{\tau \geq h\}} \E\left[G(X_{t+h+(\tau-h)}^{t,x}) \mid \mathcal{F}_{t+h} \right]\right) \notag \\
&\geq\E\left(G(X_{t+\tau}^{t,x})\1_{\{\tau <h\}}+V\left(t+h,X_{t+h}^{t,x}\right)\1_{\{\tau \geq h\}} \right) \notag \\
&\geq \E \left(V(t+h,x)-\frac{1}{2}e^{\mu T} \delta -
(\E\ell +G(x))\1_{\{\sup_{0 \leq s \leq h} |X_{t+s}^{t,x}-x|>\delta\}} \right) \notag \\
&\geq V(t+h,x)-\frac{1}{2}e^{\mu T} \delta \notag \\
&\quad-
(\E\ell +G(x))\P_{t,x}\left(\sup_{0 \leq s \leq h} |X_{t+s}^{t,x}-x|>\delta\right).
\end{align}
Taking the infimum over all stopping times $\tau\in[0,T-t]$ on the left-hand side of \eqref{4.14}, we obtain \eqref{4.11}. For each fixed $\delta>0$, by the continuity of the sample paths of $X$, we have
\begin{equation}\label{4.15}
\P_{t,x}\left(\sup_{0 \leq s \leq h}|X_{t+s}^{t,x}-x|>\delta\right)\to0
\end{equation}
as $h\downarrow0$. Hence, by \eqref{4.11},
\begin{equation}\label{4.16}
\limsup_{h\downarrow0}\left[V(t+h,x)-V(t,x)\right]\leq \frac{1}{2}e^{\mu T}\delta .
\end{equation}
Since $\delta>0$ is arbitrary, letting $\delta\downarrow0$ gives
\begin{equation}\label{4.17}
\limsup_{h\downarrow0}\left[V(t+h,x)-V(t,x)\right]\leq0.
\end{equation}
On the other hand, using $V(t+h,x)-V(t,x)\geq0$ together with \eqref{4.17}, we have
\begin{equation}\label{4.18}
\lim_{h\downarrow0}\left[V(t+h,x)-V(t,x)\right]=0,
\end{equation}
which proves the right-continuity of $t\mapsto V(t,x)$. The left-continuity follows from \eqref{4.12} in exactly the same way. Since $t\in[0,T]$ is arbitrary, $t\mapsto V(t,x)$ is continuous. Thus the value function $(t,x)\mapsto V(t,x)$ is continuous on $[0,T]\times(0,\infty)$.

2. We show that the stopping set in \eqref{3.6} is given by
\begin{equation}\label{4.19}
D=\{(t,x)\in[0,T)\times(0,\infty)\mid a(t)\le x\le c(t)\}
  \cup(\{T\}\times(0,\infty)),
\end{equation}
where $a:[0,T)\to(0,\infty)$ is decreasing and $c:[0,T)\to(0,\infty)$ is increasing, satisfying $\alpha\le a(t)\le m\le M\le c(t)<\infty$ for every $t\in[0,T)$.

For any fixed $t\in[0,T)$, define $D_t:=\{x>0\mid (t,x)\in D\}$ and $C_t:=\{x>0\mid (t,x)\in C\}$. From the previous analysis, we have already proved that $[m,M]\subseteq D_t$ for every $t\in[0,T]$, hence $D_t\neq\varnothing$ for every $t\in[0,T)$. Moreover, we have proved that $(0,\alpha)\subseteq C_t$, hence $D_t\cap(0,\alpha)=\varnothing$. Besides, by the high-state continuation result, for any fixed $t\in[0,T)$, there exists some $K_t>M$ such that $[K_t,\infty)\subseteq C_t$. Therefore, by \eqref{3.8},
\begin{equation}\label{4.20}
D_t\subseteq[\alpha,K_t].
\end{equation}

We first prove that $D_t$ is compact. Since $V$ and $G$ are continuous, the function $\Phi(t,x):=V(t,x)-G(x)$ is continuous on $[0,T]\times(0,\infty)$. Therefore,
\begin{equation}\label{4.21}
D=\{(t,x)\in[0,T]\times(0,\infty)\mid \Phi(t,x)=0\}=\Phi^{-1}(\{0\}).
\end{equation}
Hence $D$ is closed in $[0,T]\times(0,\infty)$ since $\{0\}$ is closed in $\mathbb{R}$. For any fixed $t\in[0,T)$, define $\eta_t:(0,\infty)\to[0,T]\times(0,\infty)$ by $\eta_t(x):=(t,x)$. This map is continuous and $D_t=\eta_t^{-1}(D)$. Hence $D_t$ is closed in $(0,\infty)$. By \eqref{4.20}, $D_t\subseteq[\alpha,K_t]$, so $D_t$ is a closed subset of the compact interval $[\alpha,K_t]$. Hence $D_t$ is compact.

We next prove that $D_t$ is increasing in $t$. Let $0\le t<s<T$ and suppose that $x\in D_t$. Then $V(t,x)=G(x)$. Since the remaining time is shorter when the starting time is $s$, the value function is increasing in time. Hence $V(s,x)\ge V(t,x)=G(x)$. On the other hand, instantaneous stopping is an admissible strategy, so $V(s,x)\le G(x)$. Putting the two inequalities together gives $V(s,x)=G(x)$. Thus $x\in D_s$, and consequently
\begin{equation}\label{4.22}
D_t\subseteq D_s,
\qquad 0\le t<s<T.
\end{equation}

We now prove that $D_t$ is connected. Fix $t\in[0,T)$ and choose $x_1,x_2\in D_t$ with $x_1<x_2$. It is sufficient to prove that every point between $x_1$ and $x_2$ also belongs to $D_t$. Suppose this is not the case. Then there exists some $y\in(x_1,x_2)$ such that $y\notin D_t$, which means $(t,y)\in C$. Define
\begin{equation}\label{4.23}
\tau_D:=\inf\{u\in[0,T-t]\mid (t+u,X_{t+u})\in D\}.
\end{equation}
Since $C$ is open and $X$ has continuous sample paths, starting from $(t,y)$ gives $\tau_D>0$. Since $x_1,x_2\in D_t$ and $D_t$ is increasing in $t$, by \eqref{4.22} we have $x_1,x_2\in D_{t+u}$ for every $u\in[0,T-t]$. Therefore, before the time $\tau_D$, the process cannot hit $x_1$ or $x_2$. By the continuity of $X$, we have
\begin{equation}\label{4.24}
x_1<X_{t+u}<x_2,
\qquad 0\le u<\tau_D.
\end{equation}
Since $x_1,x_2\in D_t$ and $D_t\cap(0,\alpha)=\varnothing$, we have $x_1\ge\alpha$, and hence $X_{t+u}>\alpha$ for $0\le u<\tau_D$. Moreover, since $[m,M]\subseteq D_{t+u}$ for every $u\in[0,T-t]$, the process cannot enter $[m,M]$ before $\tau_D$. Hence
\begin{equation}\label{4.25}
X_{t+u}\in(\alpha,m)\cup(M,\infty),
\qquad 0\le u<\tau_D.
\end{equation}
On $(\alpha,m)$, the admissibility condition gives $H(x)>0$. On $(M,\infty)$, we have $F(x)-1/2>0$ and $F'(x)\ge0$, which also gives $H(x)>0$. Therefore, $H(X_{t+u})>0$ for $0\le u<\tau_D$, and since $\tau_D>0$,
\begin{equation}\label{4.26}
\mathbb{E}_{t,y}\int_0^{\tau_D}H(X_{t+u})\,du>0.
\end{equation}

Applying It\^o--Tanaka's formula to $G(X_{t+u})$, and then replacing $u$ by $\tau_D$, we get
\begin{equation}\label{4.27}
G(X_{t+\tau_D})
=
G(y)
+
\int_0^{\tau_D}H(X_{t+r})\,dr
+
\int_0^{\tau_D}\sigma X_{t+r}^{\beta+1}G'(X_{t+r})\,dB_{t+r}.
\end{equation}
We now show that the stochastic integral in \eqref{4.27} vanishes after taking expectation. This is the same argument as the one used above when proving $(0,\alpha)\subseteq C_t$. Indeed, before $\tau_D$, the process is trapped in the compact interval $[x_1,x_2]$. Since $|G'|\le1/2$, we have
\begin{equation}\label{4.28}
\int_0^{\tau_D}\sigma^2X_{t+r}^{2\beta+2}(G'(X_{t+r}))^2\,dr
\le
\frac{\sigma^2}{4}x_2^{2\beta+2}(T-t)
<\infty.
\end{equation}
Thus the stopped stochastic integral $\left(\int_0^{u\wedge\tau_D}\sigma X_{t+r}^{\beta+1}G'(X_{t+r})\,dB_{t+r}\right)_{0\le u\le T-t}$ is a square-integrable martingale. Hence
\begin{equation}\label{4.29}
\mathbb{E}_{t,y}
\left[
\int_0^{\tau_D}\sigma X_{t+r}^{\beta+1}G'(X_{t+r})\,dB_{t+r}
\right]
=0.
\end{equation}
Taking $\mathbb{E}_{t,y}$ on both sides of \eqref{4.27}, and using \eqref{4.26} and \eqref{4.29}, we obtain
\begin{equation}\label{4.30}
\mathbb{E}_{t,y}G(X_{t+\tau_D})
=
G(y)
+
\mathbb{E}_{t,y}\int_0^{\tau_D}H(X_{t+r})\,dr
>
G(y).
\end{equation}
Since $\tau_D$ is optimal for the problem starting from $(t,y)$, we have $V(t,y)=\mathbb{E}_{t,y}G(X_{t+\tau_D})>G(y)$. This contradicts the definition of $V$, which gives $V(t,y)\le G(y)$. Hence no such $y$ exists, and therefore $D_t$ is connected.

Now we have proved that, for every $t\in[0,T)$, $D_t$ is non-empty, compact and connected. Hence $D_t$ must be a compact interval. Define $a(t):=\min D_t$ and $c(t):=\max D_t$. Then
\begin{equation}\label{4.31}
D_t=[a(t),c(t)].
\end{equation}
The inclusions proved above give the locations of the two boundaries. Since $D_t\cap(0,\alpha)=\varnothing$, we have $a(t)\ge\alpha$. Since $[m,M]\subseteq D_t=[a(t),c(t)]$, we have $a(t)\le m\le M\le c(t)$. Since $D_t\subseteq[\alpha,K_t]$, we also have $c(t)<\infty$. Therefore,
\begin{equation}\label{4.32}
\alpha\le a(t)\le m\le M\le c(t)<\infty,
\qquad t\in[0,T).
\end{equation}
Finally, since $D_t$ is increasing in $t$, for $0\le t<s<T$ we have $D_t\subseteq D_s$. Using \eqref{4.31}, we get $[a(t),c(t)]\subseteq[a(s),c(s)]$. Hence $a(s)\le a(t)$ and $c(t)\le c(s)$. This means that $a$ is decreasing and $c$ is increasing on $[0,T)$. Combining these conclusions, we obtain \eqref{4.19} as claimed.

3. We then show that $a(t)$ and $c(t)$ are continuous on $[0,T)$, and that $a(T-)=\alpha$, $c(T-)=\infty$. Since $V$ is continuous, $D$ is closed and $C$ is open. 

First we show that $a$ is right-continuous. Consider any time $t\in[0,T)$, and take a sequence $t_n\downarrow t$. Since $a$ is decreasing, the limit $a(t+):=\lim_{n\to\infty}a(t_n)$ exists and satisfies $a(t+)\le a(t)$. Since $(t_n,a(t_n))\in D$ for every $n$, and $D$ is closed, letting $n\to\infty$ gives $(t,a(t+))\in D$. Hence $a(t+)\in D_t=[a(t),c(t)]$, and so $a(t+)\ge a(t)$. Therefore $a(t+)=a(t)$, and $a$ is right-continuous.

Now we show that $a$ has no jump on $(0,T)$. Assume that $a$ has a jump. Since $a$ is decreasing, it can only jump downwards. If we call the jumping time $t_0$, then $a(t_0-)>a(t_0)$. Set
\begin{equation}\label{4.33}
x_1=a(t_0),\qquad x_2=a(t_0-).
\end{equation}
Then $x_1<x_2$. Take any interval $I=(u,v)$ that is strictly contained in $(x_1,x_2)$, i.e., $x_1<u<v<x_2$. Since $\alpha\le a(t)\le m$, we have $I\subset(\alpha,m)$. Since $a$ is monotone, it has left limits and right limits at every time $t$ \cite{Principles_of_mathematical_analysis}. Hence, for $s<t_0$, $a(s)\downarrow a(t_0-)=x_2$ as $s\uparrow t_0$. Therefore there exists some positive $\delta$ such that, for all $s\in(t_0-\delta,t_0)$, we have $a(s)>x_2>v$. Hence, for every $x\in I$, we have $x<a(s)$. By the definition of the continuation set, this gives that $(s,x)$ lies in the continuation set for $s\in(t_0-\delta,t_0)$ and $x\in I$. Thus, on the rectangle $(t_0-\delta,t_0)\times I$, we can use the same test-function argument as in the GBM case.

Take a non-negative function with compact support in $I$, $\varphi\in C_c^2(I)$, $\varphi\not\equiv0$, such that there exists $\varepsilon>0$ satisfying
\begin{equation}\label{4.34}
\varphi(x)=\varphi'(x)=0
\qquad
\text{for all }x\in(0,u+\varepsilon]\cup[v-\varepsilon,\infty).
\end{equation}
Since $t\mapsto V(t,x)$ is increasing, we have $V_t\ge0$. Therefore, for $s\in(t_0-\delta,t_0)$,
\begin{equation}\label{4.35}
0\le V_t(s,x)=-(\mathbb L_XV)(s,x)
\end{equation}
in the continuation region, and consequently
\begin{equation}\label{4.36}
0\le \int_I V_t(s,x)\varphi(x)\,dx
=
-\int_I(\mathbb L_XV)(s,x)\varphi(x)\,dx.
\end{equation}
Moreover,
\begin{equation}\label{4.37}
\int_I(\mathbb L_XV)(s,x)\varphi(x)\,dx
=
\int_I \mu xV_x(s,x)\varphi(x)\,dx
+
\int_I \frac12\sigma^2x^{2\beta+2}V_{xx}(s,x)\varphi(x)\,dx.
\end{equation}
Using integration by parts, and using the fact that $\varphi(x)=\varphi'(x)=0$ near the boundary of $I$, all boundary terms vanish. Hence
\begin{equation}\label{4.38}
\int_I(\mathbb L_XV)(s,x)\varphi(x)\,dx
=
\int_I V(s,x)
\left[
-(\mu x\varphi(x))'
+
\left(\frac12\sigma^2x^{2\beta+2}\varphi(x)\right)''
\right]dx.
\end{equation}
Define
\begin{equation}\label{4.39}
(\mathbb L_X^*\varphi)(x)
=
-(\mu x\varphi(x))'
+
\left(\frac12\sigma^2x^{2\beta+2}\varphi(x)\right)''.
\end{equation}
Then we have
\begin{equation}\label{4.40}
0\le
-\int_I V(s,x)(\mathbb L_X^*\varphi)(x)\,dx.
\end{equation}
Now let $s\uparrow t_0$. Since $V$ is continuous, $V(s,x)\to V(t_0,x)$ for every $x\in I$. Let $K:=\supp\varphi$. Then $K$ is compact. Since $\varphi\in C_c^2(I)$, the functions $\varphi$, $\varphi'$, and $\varphi''$ are continuous and bounded on $K$. Moreover, $x^{2\beta+2}$, $x^{2\beta+1}$, and $x^{2\beta}$ are also continuous and bounded on $K$. Hence $\mathbb L_X^*\varphi$ is bounded on $K$, and $\mathbb L_X^*\varphi=0$ outside $K$. In the same way, $V(s,x)$ is bounded on $[t_0-\delta,t_0]\times K$. Therefore, by dominated convergence,
\begin{equation}\label{4.41}
0\le
-\int_I V(t_0,x)(\mathbb L_X^*\varphi)(x)\,dx.
\end{equation}
At time $t_0$, we have $D_{t_0}=[a(t_0),c(t_0)]$. Since $a(t_0)=x_1<u<v<x_2=a(t_0-)\le m\le c(t_0)$, i.e., $I\subset D_{t_0}$, we have $V(t_0,x)=G(x)$ for all $x\in I$. This gives
\begin{equation}\label{4.42}
0\le
-\int_I G(x)(\mathbb L_X^*\varphi)(x)\,dx
=
-\int_I(\mathbb L_XG)(x)\varphi(x)\,dx
=
-\int_IH(x)\varphi(x)\,dx.
\end{equation}
According to our admissible level laws, $H(x)>0$ for $x\in(\alpha,m)$. Since $I\subset(\alpha,m)$, and since $\varphi\ge0$, $\varphi\not\equiv0$, we have $\int_IH(x)\varphi(x)\,dx>0$. Thus
\begin{equation}\label{4.43}
-\int_IH(x)\varphi(x)\,dx<0,
\end{equation}
which contradicts \eqref{4.42}. Hence $a$ has no jump on $(0,T)$. Combining this with the right-continuity of $a$, we know that $a$ is continuous on $[0,T)$.

The proof that $c$ is continuous is obtained in the same way. Its right-continuity follows from the closedness of $D$. If $c$ had a jump at some time $t_0\in(0,T)$, then, since $c$ is increasing, it could only jump upwards. Setting $x_1=c(t_0-)$ and $x_2=c(t_0)$ so that $x_1<x_2$, we can proceed in the same way as above to derive a contradiction. The only difference is that the interval chosen between $x_1$ and $x_2$ is now contained in $(M,\infty)$, where $H>0$. This proves that $c$ is continuous on $[0,T)$.

To see that $a(T-)=\alpha$, note first that $a(T-)\ge\alpha$ since $[0,T)\times(0,\alpha)$ is contained in $C$ as established above. Assuming that $a(T-)>\alpha$ and setting $x_1=\alpha$ and $x_2=a(T-)$ so that $x_1<x_2$, we can then proceed in exactly the same way as above to derive a contradiction. This shows that $a(T-)=\alpha$ as claimed.

Finally, since $c$ is increasing and $c(t)\ge M$ for all $t\in[0,T)$, the limit $c(T-)$ exists in $[M,\infty]$. Assuming that $c(T-)=c_*<\infty$ and choosing an interval $(x_1,x_2)$ with $c_*<x_1<x_2$, we can again proceed in the same way as above to derive a contradiction. Indeed, this interval is contained in $(M,\infty)$, where $H>0$. This shows that $c(T-)=\infty$ as claimed.

4. We show that the smooth fit holds at $a$ and $c$, meaning that $x\mapsto V(t,x)$ is differentiable at $a(t)$ and $c(t)$ with $V_x(t,a(t))=G'(a(t))$ and $V_x(t,c(t))=G'(c(t))$ for every $t\in[0,T)$. We argue as in \cite{optimal_prediction_od_resistence_and_support_level}, whose method in turn goes back to \cite{optimal_stopping_and_free_boundary_problem}.

Take $t\in[0,T)$ and set $x=a(t)$, assuming first that $a(t)<c(t)$. Since $V(t,y)=G(y)$ for $y\in[a(t),c(t)]$, the right-hand derivative of $V(t,\cdot)$ at $x$ exists and equals $G'(x)$, so it is enough to consider the left-hand derivative. For $\eps\in(0,x)$ we have $V(t,x-\eps)\le G(x-\eps)$ and $V(t,x)=G(x)$, so that
\begin{equation*}
\frac{V(t,x-\eps)-V(t,x)}{-\eps}\ge\frac{G(x-\eps)-G(x)}{-\eps},
\end{equation*}
and letting $\eps\downarrow0$ we obtain $\liminf_{\eps\downarrow0}\big(V(t,x-\eps)-V(t,x)\big)/(-\eps)\ge G'(x)$.

For the reverse inequality, let $\tau_\eps$ denote the optimal stopping time for $V(t,x-\eps)$, i.e.\ the first entry time of $(t+u,X^{t,x-\eps}_{t+u})$ into $D$, and realise $X^{t,x-\eps}$ and $X^{t,x}$ as strong solutions of \eqref{2.1} driven by the same Brownian motion, as in Step 1. Set $D_u:=X^{t,x}_{t+u}-X^{t,x-\eps}_{t+u}$, so that, by the comparison argument of Step 1, $D_u\ge0$ and $\widetilde D_u:=e^{-\mu u}D_u$ is a non-negative local martingale (hence a supermartingale) with $\widetilde D_0=\eps$. Since $\tau_\eps$ is admissible for $V(t,x)$, we have $V(t,x)\le\E\,G(X^{t,x}_{t+\tau_\eps})$ and $V(t,x-\eps)=\E\,G(X^{t,x-\eps}_{t+\tau_\eps})$, whence, by the mean value theorem,
\begin{equation}\label{4.44}
\frac{V(t,x-\eps)-V(t,x)}{-\eps}
\le
\frac{1}{\eps}\,\E\big[G(X^{t,x}_{t+\tau_\eps})-G(X^{t,x-\eps}_{t+\tau_\eps})\big]
=
\frac{1}{\eps}\,\E\big[G'(\xi_\eps)\,D_{\tau_\eps}\big]
\end{equation}
for some $\xi_\eps\in[X^{t,x-\eps}_{t+\tau_\eps},X^{t,x}_{t+\tau_\eps}]$.

We claim that $\tau_\eps\to0$ in $\P$-probability as $\eps\downarrow0$. Let $u_0\in(0,T-t)$ be given and fixed. Since $a$ is non-increasing and $c$ is non-decreasing, at the first time $u\le u_0$ at which $X^{t,x-\eps}_{t+u}\ge x$ we have, by the continuity of the sample paths, $X^{t,x-\eps}_{t+u}=x\in[a(t+u),c(t+u))$, so that $(t+u,X^{t,x-\eps}_{t+u})\in D$ and $\tau_\eps\le u_0$. Hence $\{\tau_\eps>u_0\}\subseteq\{\sup_{0\le u\le u_0}X^{t,x-\eps}_{t+u}<x\}$. Moreover, $X^{t,x-\eps}\ge X^{t,x}-\sup_{0\le u\le u_0}D_u$, and Doob's maximal inequality applied to the non-negative supermartingale $\widetilde D$ gives $\P(\sup_{0\le u\le u_0}D_u>\delta)\le e^{\mu^+(T-t)}\eps/\delta$ for every $\delta>0$, where $\mu^+:=\max\{\mu,0\}$. Therefore
\begin{equation*}
\P(\tau_\eps>u_0)\le\P\Big(\sup_{0\le u\le u_0}X^{t,x}_{t+u}<x+\delta\Big)+e^{\mu^+(T-t)}\,\eps/\delta .
\end{equation*}
Since the diffusion coefficient of $X$ is strictly positive on $(0,\infty)$, the diffusion is regular and $\sup_{0\le u\le u_0}X^{t,x}_{t+u}>x$ $\P$-a.s.\ (see e.g. \cite{handbook_of_brownian_motion}), so letting first $\eps\downarrow0$ and then $\delta\downarrow0$ shows that $\P(\tau_\eps>u_0)\to0$ as $\eps\downarrow0$. In particular, by the continuity of the sample paths, $X^{t,x}_{t+\tau_\eps}\to x$, $D_{\tau_\eps}\to0$ and $\xi_\eps\to x$ in $\P$-probability as $\eps\downarrow0$.

To pass to the limit in \eqref{4.44}, fix $\eta\in(0,x)$ and $N>c(t)$, and let $\sigma$ denote the first time $u$ at which $X^{t,x-\eps}_{t+u}\le\eta$ or $X^{t,x}_{t+u}\ge N$, capped at $T-t$. On $[0,\sigma]$ both processes take values in $[\eta,N]$, where $y\mapsto y^{\beta+1}$ is Lipschitz; a standard argument based on It\^o's formula and Gr\"onwall's inequality therefore yields a constant $C=C(\eta,N,T-t)$ such that $\E[D^2_{\rho\wedge\sigma}]\le C\eps^2$ for every stopping time $\rho\le T-t$, and the stopped process $\widetilde D_{\cdot\wedge\sigma}$ is a true martingale (its integrand being bounded on $[\eta,N]$), so that $\E[\widetilde D_{\tau_\eps\wedge\sigma}]=\eps$. Writing $A:=\{\tau_\eps\le\sigma\}$ and $p_\eps:=\P(A^c)$, we decompose
\begin{equation*}
\begin{aligned}
\frac{1}{\eps}\,\E\big[G'(\xi_\eps)D_{\tau_\eps}\big]
={}& G'(x)\,\frac{1}{\eps}\,\E\big[D_{\tau_\eps}\1_{A}\big]\\
&+\frac{1}{\eps}\,\E\big[(G'(\xi_\eps)-G'(x))\,D_{\tau_\eps}\1_{A}\big]\\
&+\frac{1}{\eps}\,\E\big[G'(\xi_\eps)\,D_{\tau_\eps}\1_{A^c}\big].
\end{aligned}
\end{equation*}
By the Cauchy--Schwarz inequality and the $L^2$-bound above, the second term is bounded in absolute value by $C^{1/2}\big(\E[(G'(\xi_\eps)-G'(x))^2]\big)^{1/2}$, which tends to $0$ as $\eps\downarrow0$, since $G'=F-\tfrac12$ is continuous and bounded and $\xi_\eps\to x$ in probability. For the third term, $|G'|\le\tfrac12$ and, by optional sampling for the supermartingale $\widetilde D$ at $\sigma\le\tau_\eps$ on $A^c$ together with the Cauchy--Schwarz inequality, $\E[D_{\tau_\eps}\1_{A^c}]\le e^{\mu^+(T-t)}\E[\widetilde D_{\sigma}\1_{A^c}]\le C_1\,\eps\,p_\eps^{1/2}$ with a constant $C_1$ depending only on $\eta$, $N$, $T-t$ and the parameters of \eqref{2.1}. For the first term, writing $D_{\tau_\eps}\1_{A}=e^{\mu(\tau_\eps\wedge\sigma)}\widetilde D_{\tau_\eps\wedge\sigma}\1_{A}$ and using $\E[\widetilde D_{\tau_\eps\wedge\sigma}]=\eps$, the same bounds give
\begin{equation*}
\Big|\frac{1}{\eps}\,\E\big[D_{\tau_\eps}\1_{A}\big]-1\Big|
\le
C^{1/2}\big(\E[(e^{\mu(\tau_\eps\wedge\sigma)}-1)^2]\big)^{1/2}+C_1\,p_\eps^{1/2},
\end{equation*}
where the first summand tends to $0$ since $\tau_\eps\wedge\sigma\to0$ in probability. Finally, $p_\eps\le\P(\sigma\le u_0)+\P(\tau_\eps>u_0)$, and by the path continuity and the maximal inequality above, $\limsup_{\eps\downarrow0}\P(\sigma\le u_0)\le\P(\sup_{0\le u\le u_0}X^{t,x}_{t+u}\ge N)+\P(\inf_{0\le u\le u_0}X^{t,x}_{t+u}\le\eta+\delta)$ for every $\delta\in(0,x-\eta)$, and the right-hand side tends to $0$ as $u_0\downarrow0$; hence $p_\eps\to0$ as $\eps\downarrow0$. Combining these facts and letting $\eps\downarrow0$ in \eqref{4.44}, we obtain
\begin{equation*}
\limsup_{\eps\downarrow0}\frac{V(t,x-\eps)-V(t,x)}{-\eps}\le G'(x),
\end{equation*}
which together with the lower bound above shows that the left-hand derivative of $V(t,\cdot)$ at $x$ exists and equals $G'(x)$. Hence $V_x(t,a(t))=G'(a(t))$.

The argument at the upper boundary is symmetric, and we only indicate the changes. For $x=c(t)$ (still with $a(t)<c(t)$) and $\eps>0$, the inequality $V\le G$ with equality at $x$ gives $\limsup_{\eps\downarrow0}(V(t,x+\eps)-V(t,x))/\eps\le G'(x)$. With $\tau_\eps$ the optimal stopping time for $V(t,x+\eps)$ and $D_u:=X^{t,x+\eps}_{t+u}-X^{t,x}_{t+u}\ge0$, the mean value theorem now gives $(V(t,x+\eps)-V(t,x))/\eps\ge\frac{1}{\eps}\E[G'(\xi_\eps)D_{\tau_\eps}]$ for some $\xi_\eps\in[X^{t,x}_{t+\tau_\eps},X^{t,x+\eps}_{t+\tau_\eps}]$. At the first time $u\le u_0$ at which $X^{t,x+\eps}_{t+u}\le x$ we have $X^{t,x+\eps}_{t+u}=x=c(t)\le c(t+u)$ and $x>a(t)\ge a(t+u)$, so that $\{\tau_\eps>u_0\}\subseteq\{\inf_{0\le u\le u_0}X^{t,x+\eps}_{t+u}>x\}$; since $X^{t,x+\eps}\le X^{t,x}+\sup_{0\le u\le u_0}D_u$ and $\inf_{0\le u\le u_0}X^{t,x}_{t+u}<x$ $\P$-a.s.\ by the regularity of $X$, the same use of Doob's maximal inequality yields $\tau_\eps\to0$ in probability, and the localisation argument above goes through verbatim, giving $\liminf_{\eps\downarrow0}(V(t,x+\eps)-V(t,x))/\eps\ge G'(x)$. Hence $V_x(t,c(t))=G'(c(t))$.

If instead $a(t)=c(t)$, which by the inequalities of Step~2 can only happen with $a(t)=c(t)=m=M$, the smooth fit at this point holds by a direct argument: since $G(m)=\min G$, we have $V(t,y)\ge G(m)$ for all $y>0$, while $V\le G$ and $V(t,m)=G(m)$; hence $0\le V(t,m\pm h)-V(t,m)\le G(m\pm h)-G(m)\le h\sup_{|y-m|\le h}|F(y)-\tfrac12|=o(h)$ as $h\downarrow0$, because $F$ is continuous with $F(m)=\tfrac12$, so that $V_x(t,m)$ exists and equals $0=G'(m)$.

5. Combining the above analysis with the Markovian property of the CEV process, we can conclude that $V$, $a$ and $c$ solve the following free-boundary problem:
\begin{align}
&V_t(t,x)+\mu xV_x(t,x)+\frac{\sigma^2}{2}x^{2\beta+2}V_{xx}(t,x)=0
\label{4.45} \\
&\qquad \text{for } x\in(0,a(t))\cup(c(t),\infty),\ t\in[0,T),
\notag \\
&V(t,x)=G(x)
\quad \text{for } x\in[a(t),c(t)],\ t\in[0,T),
\quad \text{(instantaneous stopping)},
\label{4.46} \\
&V_x(t,x)=G'(x)
\quad \text{for } x=a(t)\text{ or }x=c(t),\ t\in[0,T),
\quad \text{(smooth fit)},
\label{4.47} \\
&V(t,x)<G(x)
\quad \text{for } x\in(0,a(t))\cup(c(t),\infty),\ t\in[0,T),
\label{4.48} \\
&V(T,x)=G(x)
\quad \text{for } x\in(0,\infty).
\label{4.49}
\end{align}
We next derive a representation of the value function $V$ through the boundaries $a$ and $c$. The argument relies on the local time-space calculus framework for free-boundary problems, see for instance \cite{optimal_stopping_and_free_boundary_problem} and the literature cited there. This representation will then be used to obtain two coupled nonlinear integral equations, which will be shown to determine $a$ and $c$ uniquely.

6. We prove the representation formula \eqref{4.4} for $V$ and then derive the coupled nonlinear integral equations \eqref{4.1} and \eqref{4.2}. Fix $(t,x)\in[0,T)\times(0,\infty)$, and let $s\in(0,T-t)$ be given and fixed. Since $a$ and $c$ are continuous and monotone on $[0,T)$, they are of bounded variation on every compact sub-interval of $[0,T)$. Hence, on $[t,t+s]$, the two curves $a$ and $c$ are finite and continuous, and they satisfy the boundary requirements of the local time-space formula derived in \cite{A_Change-of-Variable_Formula_with_Local_Time_on_Curves}. When $m<M$ the two curves are non-intersecting, being separated by the median interval. When $m=M$ they could a priori touch; by their monotonicity and continuity this can only happen with $a=c=m$ along an initial time interval, on which the open band $(a(r),c(r))$ is empty, so that the occupation term below receives no contribution from such an interval, while the corresponding local-time term vanishes by the smooth-fit property established in Step~4 for this degenerate configuration as well. Moreover, $V$ is continuous, $C^{1,2}$ in the continuation region, and equal to $G$ in the stopping region. Since $G$ is $C^1$ and piecewise $C^2$, and by \eqref{4.45} and \eqref{4.46} we know that $V_t+\mathbb L_X V$ is locally bounded. The smooth-fit condition at $a$ and $c$ makes the two local time terms vanish. Hence, applying the local time-space formula to $V(t+u,X_{t+u})$, $0\le u\le s$, gives
\begin{equation}\label{4.50}
\begin{aligned}
V(t+s,X_{t+s})
&=
V(t,x)
+
\int_0^s
H(X_{t+u})
\1_{\{a(t+u)<X_{t+u}<c(t+u)\}}
\,du
+
M_s,
\end{aligned}
\end{equation}
where
\begin{equation}\label{4.51}
M_s
=
\int_0^s
\sigma X_{t+u}^{\beta+1}
V_x(t+u,X_{t+u})\,dB_{t+u}
\end{equation}
is a continuous local martingale. Here the value of $V_x$ on the two curves is irrelevant, since the local time terms have already disappeared by smooth fit.

We now localize $M$. Let $\rho_n = \inf\{u\ge0:\langle M,M\rangle_u\ge n\}\wedge s$. Then $(M_{u\wedge\rho_n})_{u\ge0}$ is a square-integrable martingale and $\rho_n\uparrow s$ as $n\to\infty$. Replacing $s$ by $\rho_n$ in \eqref{4.50}, and taking expectation under $\P_{t,x}$, we obtain
\begin{equation}\label{4.52}
\E_{t,x}V(t+\rho_n,X_{t+\rho_n})
=
V(t,x)
+
\E_{t,x}\int_0^{\rho_n}
H(X_{t+u})
\1_{\{a(t+u)<X_{t+u}<c(t+u)\}}
\,du .
\end{equation}
We next pass $n$ to infinity. First note that the integrand in the right hand side of \eqref{4.52} is non-negative. Indeed, on $(a(r),m)$, admissibility in Definition \ref{adm law} gives $H(x)>0$. On $[m,M]$ we have $F(x)=1/2$, so the drift term of $H$ vanishes and $H(x)=\tfrac{\sigma^2}{2}x^{2\beta+2}F'(x)\ge0$, with $H=0$ on the open interval $(m,M)$, where $F$ is constant and hence $F'=0$. On $(M,c(r))$, we have $F(x)-1/2>0$ and $F'(x)\ge0$, hence $H(x)>0$. Therefore $H(x)\1_{\{a(t)<x<c(t)\}}\ge0$ for all $t<T$. Hence, by the monotone convergence theorem,
\begin{multline}\label{4.53}
\E_{t,x}\int_0^{\rho_n}
H(X_{t+u})\1_{\{a(t+u)<X_{t+u}<c(t+u)\}}\,du\\
\longrightarrow
\E_{t,x}\int_0^s
H(X_{t+u})\1_{\{a(t+u)<X_{t+u}<c(t+u)\}}\,du .
\end{multline}

It remains to justify the convergence on the left-hand side of \eqref{4.52}. By \eqref{3.4} and the fact that $0\le R(y)\le \E\ell$, we have $G(x)\ge -\E\ell$ for all $x>0$. By definition of $V$, this implies $V(t,x)\ge -\E\ell$ for all $(t,x)\in[0,T]\times(0,\infty)$. Denote $c_0:=\E\ell$. Then
\begin{equation}\label{4.54}
V(t+\rho_n,X_{t+\rho_n})+c_0\ge0.
\end{equation}
On the other hand, choosing to stop at the terminal time $T$ and using the strong Markov property gives
\begin{equation}\label{4.55}
V(t+\rho_n,X_{t+\rho_n})
\le
\E_{t+\rho_n,X_{t+\rho_n}}G(X_T)
=\E_{t,x}\left[G(X_T)\mid\mathcal F_{t+\rho_n}\right].
\end{equation}
Define $A_u:=\E_{t,x}\left[G(X_T)\mid\mathcal F_{t+u}\right]$ for $0\le u\le T-t$. We have $G(X_T)\in L^1(\P_{t,x})$. Indeed, \eqref{3.4} gives $|G(x)|\le x/2+\E\ell$, and since the discounted CEV process $e^{-\mu u}X_{t+u}$ is a non-negative local martingale, it is a supermartingale. Hence $\E_{t,x}X_T\le e^{\mu(T-t)}x<\infty$. Therefore, $(A_u)_{u\ge0}$ is a uniformly integrable martingale, and so is $(A_{\rho_n})_{n\ge1}$. Combining \eqref{4.54} and \eqref{4.55}, we have $0\le V(t+\rho_n,X_{t+\rho_n})+c_0\le A_{\rho_n}+c_0$. Thus the sequence $\big(V(t+\rho_n,X_{t+\rho_n})\big)_{n\ge1}$ is uniformly integrable. Since $V$ is continuous and $X$ has continuous sample paths, we also have $V(t+\rho_n,X_{t+\rho_n})\to V(t+s,X_{t+s})$. Therefore,
\begin{equation}\label{4.56}
\E_{t,x}V(t+\rho_n,X_{t+\rho_n})
\longrightarrow
\E_{t,x}V(t+s,X_{t+s}).
\end{equation}
Passing $n$ to infinity in both sides of \eqref{4.52}, and putting \eqref{4.53} and \eqref{4.56} together, gives
\begin{equation}\label{4.57}
\E_{t,x}V(t+s,X_{t+s})
=
V(t,x)
+
\E_{t,x}\int_0^s
H(X_{t+u})
\1_{\{a(t+u)<X_{t+u}<c(t+u)\}}
\,du.
\end{equation}
Equivalently,
\begin{equation}\label{4.58}
V(t,x)
=
\E_{t,x}V(t+s,X_{t+s})
-
\E_{t,x}\int_0^s
H(X_{t+u})
\1_{\{a(t+u)<X_{t+u}<c(t+u)\}}
\,du .
\end{equation}
Now let $s=T-t$. By \eqref{3.14} and \eqref{3.17}, we have
\begin{equation}\label{4.59}
V(t,x)
=
J(t,x)
-
\int_t^T
M(s-t,x,a(s),c(s))\,ds,
\end{equation}
which is exactly \eqref{4.4}. Here we used Fubini's theorem and \eqref{3.17}:
\begin{align}\label{4.60}
& \E_{t,x}\int_0^{T-t}
H(X_{t+u})
\1_{\{a(t+u)<X_{t+u}<c(t+u)\}}
\,du \notag \\
&=\int_0^{T-t}
\E_{t,x}\left[
H(X_{t+u})
\1_{\{a(t+u)<X_{t+u}<c(t+u)\}}
\right]du \notag \\
&=\int_0^{T-t}
M(u,x,a(t+u),c(t+u))\,du
=\int_t^T
M(s-t,x,a(s),c(s))\,ds .
\end{align}
Substituting $x=a(t)$ and $x=c(t)$ into \eqref{4.59} gives, respectively,
\begin{equation}\label{4.61}
J(t,a(t))
=
G(a(t))
+
\int_t^T
M(s-t,a(t),a(s),c(s))\,ds,
\end{equation}
and
\begin{equation}\label{4.62}
J(t,c(t))
=
G(c(t))
+
\int_t^T
M(s-t,c(t),a(s),c(s))\,ds,
\end{equation}
which are \eqref{4.1} and \eqref{4.2}, as claimed.

7. We show that $a$ and $c$ are the unique solution to the coupled nonlinear integral equations \eqref{4.1} and \eqref{4.2} in the class of continuous functions $t\mapsto a(t)$ and $t\mapsto c(t)$ satisfying $\alpha\le a(t)\le m\le M\le c(t)<\infty$ for $0\le t<T$.

Let $(\tilde a,\tilde c)$ be another pair of continuous functions on $[0,T)$ satisfying $\alpha\le \tilde a(t)\le m\le M\le \tilde c(t)<\infty$ for $0\le t<T$, and suppose that $(\tilde a,\tilde c)$ solves \eqref{4.1} and \eqref{4.2} respectively, that is,
\begin{equation}\label{4.63}
J(t,\tilde a(t))
=
G(\tilde a(t))
+
\int_t^T
M(s-t,\tilde a(t),\tilde a(s),\tilde c(s))\,ds,
\end{equation}
and
\begin{equation}\label{4.64}
J(t,\tilde c(t))
=
G(\tilde c(t))
+
\int_t^T
M(s-t,\tilde c(t),\tilde a(s),\tilde c(s))\,ds,
\end{equation}
for every $0\le t<T$. We will now prove that $\tilde a(t)=a(t)$ and $\tilde c(t)=c(t)$ for $0\le t<T$.

\emph{(i).} We first show that, for every $(t,x)\in[0,T)\times(0,\infty)$,
\begin{equation}\label{4.65}
\E_{t,x}\int_0^{T-t}H^+(X_{t+u})\,du<\infty,
\end{equation}
where $H^+(x)=\max\{H(x),0\}$, $H^-(x)=\max\{-H(x),0\}$ and $H=H^+-H^-$. Define the localising sequence by $\rho_n=\inf\{u\ge0:X_{t+u}\notin(1/n,n)\}\wedge(T-t)$. Since the CEV process with $\beta>0$ stays in $(0,\infty)$ over every finite time interval, we have $\rho_n\uparrow T-t$ as $n\uparrow\infty$. Applying It\^o--Tanaka's formula to $G(X_{t+u})$ and substituting $u$ by $\rho_n$ gives
\begin{equation}\label{4.66}
G(X_{t+\rho_n})
=
G(x)
+
\int_0^{\rho_n}H(X_{t+u})\,du
+
\int_0^{\rho_n}
\sigma X_{t+u}^{\beta+1}G'(X_{t+u})\,dB_{t+u}.
\end{equation}
There is no local-time term in \eqref{4.66} since the loss function $G$ is $C^1$ and piecewise $C^2$. Moreover, on $[0,\rho_n]$, the process $X$ is constrained inside $[1/n,n]$, and $|G'(x)|=|F(x)-1/2|\le 1/2$. Therefore, $\E_{t,x}\int_0^{\rho_n}\sigma^2X_{t+u}^{2\beta+2}(G'(X_{t+u}))^2\,du\le \sigma^2n^{2\beta+2}(T-t)/4<\infty$, so the stochastic integral in \eqref{4.66} is a martingale. Taking expectations on both sides of \eqref{4.66} gives
\begin{equation}\label{4.67}
\E_{t,x}G(X_{t+\rho_n})
=
G(x)
+
\E_{t,x}\int_0^{\rho_n}H(X_{t+u})\,du .
\end{equation}
Equivalently,
\begin{equation}\label{4.68}
\E_{t,x}\int_0^{\rho_n}H^+(X_{t+u})\,du
=
\E_{t,x}G(X_{t+\rho_n})
-
G(x)
+
\E_{t,x}\int_0^{\rho_n}H^-(X_{t+u})\,du .
\end{equation}
We now bound the right hand side. From \eqref{3.4} we have $G(y)\le y/2$, and, as noted after \eqref{4.55}, the discounted CEV process $e^{-\mu s}X_{t+s}$ is a non-negative supermartingale, so that the optional sampling theorem gives $\E_{t,x}\left(e^{-\mu\rho_n}X_{t+\rho_n}\right)\le x$, thus $\E_{t,x}X_{t+\rho_n}\le e^{\mu(T-t)}x$. So we can conclude that
\begin{equation}\label{4.69}
\E_{t,x}G(X_{t+\rho_n})
\le
\frac12 e^{\mu(T-t)}x.
\end{equation}
It remains to control $H^-$. By the admissibility condition, on $(0,\alpha)$ we have $H(x)<0$. Also, $H(x)\ge -\mu x(1/2-F(x))\ge -\mu\alpha/2$. On $[\alpha,\infty)$, the admissibility condition gives $H\ge0$. Therefore,
\begin{equation}\label{4.70}
0\le H^-(x)\le \frac{\mu\alpha}{2}.
\end{equation}
Combining \eqref{4.68}, \eqref{4.69} and \eqref{4.70}, we obtain
\begin{equation}\label{4.71}
\E_{t,x}\int_0^{\rho_n}H^+(X_{t+u})\,du
\le
\frac12 e^{\mu(T-t)}x
-
G(x)
+
\frac{\mu\alpha}{2}(T-t).
\end{equation}
The right hand side is finite and independent of $n$. Letting $n\to\infty$ and applying monotone convergence gives \eqref{4.65}.

\emph{(ii).} We now construct a martingale that will be used later. For any candidate solution pair $(\tilde a,\tilde c)$ such that $\alpha\le \tilde a(t)\le m\le M\le \tilde c(t)<\infty$, we have
\begin{equation}\label{4.72}
0\le
H(X_{t+u})
\1_{\{\tilde a(t+u)<X_{t+u}<\tilde c(t+u)\}}
\le
H^+(X_{t+u}).
\end{equation}
Hence, by \eqref{4.65},
\begin{equation}\label{4.73}
\E_{t,x}\int_0^{T-t}
H(X_{t+u})
\1_{\{\tilde a(t+u)<X_{t+u}<\tilde c(t+u)\}}
\,du<\infty.
\end{equation}
Now define
\begin{equation}\label{4.74}
V^{\tilde a,\tilde c}(t,x)
=
\E_{t,x}G(X_T)-\E_{t,x}\int_0^{T-t}H(X_{t+u})\1_{\{\tilde a(t+u)<X_{t+u}<\tilde c(t+u)\}}\,du .
\end{equation}
The right hand side is finite by \eqref{4.73} and the fact that $G(X_T)\in L^1(\P_{t,x})$, which was proved in Step 6. Since $(\tilde a,\tilde c)$ solves \eqref{4.63} and \eqref{4.64}, we have
\begin{equation}\label{4.75}
V^{\tilde a,\tilde c}(t,\tilde a(t))=G(\tilde a(t)),
\qquad
V^{\tilde a,\tilde c}(t,\tilde c(t))=G(\tilde c(t)),
\end{equation}
for all $t\in[0,T)$. Moreover, by definition,
\begin{equation}\label{4.76}
V^{\tilde a,\tilde c}(T,x)=G(x),
\qquad \text{for all } x>0.
\end{equation}
Fix $(t,x)\in[0,T)\times(0,\infty)$, and define
\begin{equation}\label{4.77}
Y^{\tilde a,\tilde c}
=
G(X_T)
-
\int_0^{T-t}
H(X_{t+u})\1_{\{\tilde a(t+u)<X_{t+u}<\tilde c(t+u)\}}\,du .
\end{equation}
Then $Y^{\tilde a,\tilde c}\in L^1(\P_{t,x})$. We claim that the process $\mathcal M^{\tilde a,\tilde c}$, defined by
\begin{equation}\label{4.78}
\begin{aligned}
\mathcal M^{\tilde a,\tilde c}_u
:={}&
V^{\tilde a,\tilde c}(t+u,X_{t+u})
-
\int_0^u
H(X_{t+r})\1_{\{\tilde a(t+r)<X_{t+r}<\tilde c(t+r)\}}\,dr,\\
& 0\le u\le T-t,
\end{aligned}
\end{equation}
is a uniformly integrable martingale under $\P_{t,x}$. By \eqref{4.74} and the Markov property we get
\begin{equation}\label{4.79}
\begin{aligned}
&V^{\tilde a,\tilde c}(t+u,X_{t+u})\\
&\quad=
\E_{t,x}\left[
G(X_T)
-
\int_u^{T-t}
H(X_{t+r})\1_{\{\tilde a(t+r)<X_{t+r}<\tilde c(t+r)\}}\,dr
\Bigm|\mathcal F_{t+u}
\right].
\end{aligned}
\end{equation}
Since $\int_0^uH(X_{t+r})\1_{\{\tilde a(t+r)<X_{t+r}<\tilde c(t+r)\}}\,dr$ is $\mathcal F_{t+u}$-measurable, \eqref{4.78} and \eqref{4.79} give
\begin{align}\label{4.80}
\mathcal M^{\tilde a,\tilde c}_u
&=
\E_{t,x}\left[
G(X_T)
-
\int_u^{T-t}
H(X_{t+r})\1_{\{\tilde a(t+r)<X_{t+r}<\tilde c(t+r)\}}\,dr
\Bigm|\mathcal F_{t+u}
\right] \notag \\
&\quad -
\int_0^uH(X_{t+r})\1_{\{\tilde a(t+r)<X_{t+r}<\tilde c(t+r)\}}\,dr \notag \\
&=
\E_{t,x}\left[
G(X_T)
-
\int_0^{T-t}
H(X_{t+r})\1_{\{\tilde a(t+r)<X_{t+r}<\tilde c(t+r)\}}\,dr
\Bigm|\mathcal F_{t+u}
\right] \notag \\
&=
\E_{t,x}\left[
Y^{\tilde a,\tilde c}
\mid\mathcal F_{t+u}
\right].
\end{align}
Since $Y^{\tilde a,\tilde c}\in L^1(\P_{t,x})$, the process $(\mathcal M^{\tilde a,\tilde c}_u)_{0\le u\le T-t}$ is a uniformly integrable martingale. Therefore, for every stopping time $\theta\le T-t$,
\begin{equation}\label{4.81}
\E_{t,x}V^{\tilde a,\tilde c}(t+\theta,X_{t+\theta})
=
V^{\tilde a,\tilde c}(t,x)
+
\E_{t,x}\int_0^\theta
H(X_{t+u})\1_{\{\tilde a(t+u)<X_{t+u}<\tilde c(t+u)\}}\,du .
\end{equation}
By Step 6, the true value function has the representation \eqref{4.4}. Repeating the same conditional expectation argument for the true solution pair $(a,c)$, we have that, for every stopping time $\theta\le T-t$,
\begin{equation}\label{4.82}
\E_{t,x}V(t+\theta,X_{t+\theta})
=
V(t,x)
+
\E_{t,x}\int_0^\theta
H(X_{t+u})
\1_{\{a(t+u)<X_{t+u}<c(t+u)\}}
\,du .
\end{equation}

\emph{(iii).} We now prove that
\begin{equation}\label{4.83}
V^{\tilde a,\tilde c}(t,x)=G(x)
\quad\text{whenever}\quad
\tilde a(t)\le x\le\tilde c(t).
\end{equation}
If $x=\tilde a(t)$ or $x=\tilde c(t)$, this follows from \eqref{4.75}. Hence suppose that $\tilde a(t)<x<\tilde c(t)$, and define $\sigma_{\tilde a,\tilde c}=\inf\{u\in[0,T-t]:X_{t+u}\notin(\tilde a(t+u),\tilde c(t+u))\}$. For $\epsilon\in(0,T-t)$, set $S_\epsilon:=T-\epsilon$ and $\sigma_\epsilon:=\sigma_{\tilde a,\tilde c}\wedge(S_\epsilon-t)$. Then, on the compact interval $[t,S_\epsilon]$, the function $\tilde c$ is continuous and finite. Hence $C_\epsilon:=\sup_{r\in[t,S_\epsilon]}\tilde c(r)<\infty$. Before $\sigma_\epsilon$, the process stays inside the candidate stopping interval, and hence $\1_{\{\tilde a(t+u)<X_{t+u}<\tilde c(t+u)\}}=1$ for all $u\in[0,\sigma_\epsilon)$. Applying \eqref{4.81} with $\theta=\sigma_\epsilon$, we get
\begin{equation}\label{4.84}
V^{\tilde a,\tilde c}(t,x)
=
\E_{t,x}V^{\tilde a,\tilde c}(t+\sigma_\epsilon,X_{t+\sigma_\epsilon})
-
\E_{t,x}\int_0^{\sigma_\epsilon}H(X_{t+u})\,du .
\end{equation}
We also apply It\^o--Tanaka's formula to $G(X_{t+u})$, as in \eqref{4.66}, and then replace $u$ by $\sigma_\epsilon$. On $[0,\sigma_\epsilon]$, we have $\alpha\le X_{t+u}\le C_\epsilon$. Since $|G'|\le1/2$, we have
\begin{equation}\label{4.85}
\E_{t,x}\int_0^{\sigma_\epsilon}
\sigma^2X_{t+u}^{2\beta+2}(G'(X_{t+u}))^2\,du
\le
\frac{\sigma^2}{4}C_\epsilon^{2\beta+2}(T-t)<\infty.
\end{equation}
Therefore the stochastic integral in It\^o--Tanaka's formula is a square-integrable martingale, and taking expectations gives
\begin{equation}\label{4.86}
\E_{t,x}G(X_{t+\sigma_\epsilon})
=
G(x)
+
\E_{t,x}\int_0^{\sigma_\epsilon}H(X_{t+u})\,du .
\end{equation}
Combining \eqref{4.84} and \eqref{4.86}, we get
\begin{equation}\label{4.87}
V^{\tilde a,\tilde c}(t,x)-G(x)
=
\E_{t,x}\left[
V^{\tilde a,\tilde c}(t+\sigma_\epsilon,X_{t+\sigma_\epsilon})
-
G(X_{t+\sigma_\epsilon})
\right].
\end{equation}
On the event $\{\sigma_{\tilde a,\tilde c}\le S_\epsilon-t\}$, the process exits the candidate interval before or at $S_\epsilon$. By continuity of $t\mapsto X_t$, $\tilde a$ and $\tilde c$, it hits one of the candidate curves, and hence \eqref{4.75} gives $V^{\tilde a,\tilde c}(t+\sigma_\epsilon,X_{t+\sigma_\epsilon})=G(X_{t+\sigma_\epsilon})$. Thus \eqref{4.87} becomes
\begin{equation}\label{4.88}
V^{\tilde a,\tilde c}(t,x)-G(x)
=
\E_{t,x}\left[
\left(
V^{\tilde a,\tilde c}(S_\epsilon,X_{S_\epsilon})
-
G(X_{S_\epsilon})
\right)
\1_{\{\sigma_{\tilde a,\tilde c}>S_\epsilon-t\}}
\right].
\end{equation}
We now let $\epsilon\downarrow0$. We claim that
\begin{equation}\label{4.89}
\E_{t,x}\left|
V^{\tilde a,\tilde c}(S_\epsilon,X_{S_\epsilon})
-
G(X_{S_\epsilon})
\right|
\longrightarrow0.
\end{equation}
Indeed, by \eqref{4.74} applied conditionally at time $S_\epsilon$,
\begin{equation}\label{4.90}
V^{\tilde a,\tilde c}(S_\epsilon,X_{S_\epsilon})
=
\E_{t,x}\left[
G(X_T)
-
\int_{S_\epsilon}^{T}
H(X_u)
\1_{\{\tilde a(u)<X_u<\tilde c(u)\}}
\,du
\Bigm|\mathcal F_{S_\epsilon}
\right].
\end{equation}
Since $G(X_{S_\epsilon})$ is $\mathcal F_{S_\epsilon}$-measurable, Jensen's inequality for conditional expectations gives
\begin{align}\label{4.91}
&\E_{t,x}\left|
V^{\tilde a,\tilde c}(S_\epsilon,X_{S_\epsilon})
-
G(X_{S_\epsilon})
\right| \notag \\
&\le \E_{t,x}
\left|G(X_T)-G(X_{S_\epsilon})-
\int_{S_\epsilon}^{T}H(X_u)\1_{\{\tilde a(u)<X_u<\tilde c(u)\}}\,du\right| \notag \\
&\le
\E_{t,x}|G(X_T)-G(X_{S_\epsilon})|+
\E_{t,x}\int_{S_\epsilon}^{T}
H(X_r)
\1_{\{\tilde a(r)<X_r<\tilde c(r)\}}
\,dr .
\end{align}
For the first term, since $|G'|\le1/2$, we have $\E_{t,x}|G(X_T)-G(X_{S_\epsilon})|\le \E_{t,x}|X_T-X_{S_\epsilon}|/2$. The path continuity of $X$ implies that $\E_{t,x}|X_T-X_{S_\epsilon}|\to0$ as $\epsilon\downarrow0$; the proof is analogous to the procedure in Step 1. Therefore, the first term in \eqref{4.91} tends to zero. For the second term, we proved that $\E_{t,x}\int_t^T H^+(X_r)\,dr<\infty$ and $0\le H(X_r)\1_{\{\tilde a(r)<X_r<\tilde c(r)\}}\le H^+(X_r)$. Thus
\begin{equation}\label{4.92}
\E_{t,x}\int_{S_\epsilon}^{T}
H(X_u)
\1_{\{\tilde a(u)<X_u<\tilde c(u)\}}
\,du
\longrightarrow0
\end{equation}
by monotone convergence applied to the decreasing terminal intervals. Combining \eqref{4.91} and \eqref{4.92}, we obtain \eqref{4.89}. Letting $\epsilon\downarrow0$ in \eqref{4.88}, we get \eqref{4.83}.

\emph{(iv).} We next show that
\begin{equation}\label{4.93}
V^{\tilde a,\tilde c}(t,x)\ge V(t,x),
\quad  \text{for all } (t,x)\in[0,T]\times(0,\infty).
\end{equation}
For $x\in[\tilde a(t),\tilde c(t)]$, by \eqref{4.83} we have $V^{\tilde a,\tilde c}(t,x)=G(x)\ge V(t,x)$. If $x\notin[\tilde a(t),\tilde c(t)]$, define
\begin{equation}\label{4.94}
\tau_{\tilde a,\tilde c}
=
\inf\{u\in[0,T-t):
\tilde a(t+u)\le X_{t+u}\le\tilde c(t+u)\}.
\end{equation}
Before $\tau_{\tilde a,\tilde c}$, we have $\1_{\{\tilde a(t+u)<X_{t+u}<\tilde c(t+u)\}}=0$. Hence \eqref{4.81} gives
\begin{equation}\label{4.95}
V^{\tilde a,\tilde c}(t,x)
=
\E_{t,x}
V^{\tilde a,\tilde c}
(t+\tau_{\tilde a,\tilde c},X_{t+\tau_{\tilde a,\tilde c}}).
\end{equation}
At $\tau_{\tilde a,\tilde c}$, the process either enters the candidate stopping interval or reaches $T$. In both cases, by \eqref{4.83} and \eqref{4.76}, respectively, we have
\begin{equation}\label{4.96}
V^{\tilde a,\tilde c}
(t+\tau_{\tilde a,\tilde c},X_{t+\tau_{\tilde a,\tilde c}})
=
G(X_{t+\tau_{\tilde a,\tilde c}}).
\end{equation}
Therefore,
\begin{equation}\label{4.97}
V^{\tilde a,\tilde c}(t,x)
=
\E_{t,x}G(X_{t+\tau_{\tilde a,\tilde c}})
\ge
V(t,x),
\end{equation}
because $\tau_{\tilde a,\tilde c}$ is an admissible stopping time for the problem defining $V(t,x)$.

\emph{(v).} We now compare the candidate boundaries with the true boundaries. Recall that $H(x)>0$ on $(\alpha,m)\cup(M,\infty)$. In the present step, we prove that $\tilde a(t)\le a(t)$ for all $t\in[0,T)$. Suppose this is not true. Then there exists $t<T$ such that $\tilde a(t)>a(t)$.

By continuity of $a$ and $\tilde a$, we may choose a small enough $\epsilon>0$ such that $a(t+u)<\tilde a(t+u)$ for $u\in[0,\epsilon]$, where $\epsilon<T-t$. Set $x=\tilde a(t)$ and define
\begin{equation}\label{4.98}
\theta
=
\inf\{u\in[0,\epsilon]:
X_{t+u}\le a(t+u)
\ \text{or}\
X_{t+u}>c(t+u)\}.
\end{equation}
At the initial point $x=\tilde a(t)$, we have $x\in[a(t),c(t)]\cap[\tilde a(t),\tilde c(t)]$. Therefore, by \eqref{4.83} and by the definition of the true stopping set,
\begin{equation}\label{4.99}
V^{\tilde a,\tilde c}(t,x)=G(x)=V(t,x).
\end{equation}
At time $t+\theta$, we have $V^{\tilde a,\tilde c}\ge V$ by \eqref{4.93}. Hence
\begin{equation}\label{4.100}
0
\le
\E_{t,x}\left[
V^{\tilde a,\tilde c}(t+\theta,X_{t+\theta})
-
V(t+\theta,X_{t+\theta})
\right].
\end{equation}
Using \eqref{4.81}, \eqref{4.82}, \eqref{4.99} and \eqref{4.100}, we obtain
\begin{align}\label{4.101}
0
&\le
\E_{t,x}\int_0^\theta
H(X_{t+u})
\left(
\1_{\{\tilde a(t+u)<X_{t+u}<\tilde c(t+u)\}}
-
\1_{\{a(t+u)<X_{t+u}<c(t+u)\}}
\right)du .
\end{align}
For $0\le u<\theta$, we have $a(t+u)<X_{t+u}\le c(t+u)$, which gives $\1_{\{a(t+u)<X_{t+u}<c(t+u)\}}=1$. Moreover, $\1_{\{\tilde a(t+u)<X_{t+u}<\tilde c(t+u)\}}\le1$. Hence
\begin{equation}\label{4.102}
\1_{\{\tilde a(t+u)<X_{t+u}<\tilde c(t+u)\}}-
\1_{\{a(t+u)<X_{t+u}<c(t+u)\}}
\le
-
\1_{\{a(t+u)<X_{t+u}\le\tilde a(t+u)\}}.
\end{equation}
Thus \eqref{4.101} implies
\begin{equation}\label{4.103}
0
\le
-
\E_{t,x}\int_0^\theta
H(X_{t+u})
\1_{\{a(t+u)<X_{t+u}\le\tilde a(t+u)\}}
\,du .
\end{equation}
The expectation in \eqref{4.103}, without the minus sign, is strictly positive. Indeed, since $a(t)<\tilde a(t)$ and both $a,\tilde a$ are continuous, the strip $\{(u,y):a(t+u)<y\le \tilde a(t+u)\}$ contains, on a small deterministic time interval, a non-empty rectangle contained in $(\alpha,m)$. On this rectangle $H>0$. Moreover, the CEV diffusion is non-degenerate on compact subintervals of $(0,\infty)$, and hence it has positive probability to enter this rectangle and spend a strictly positive amount of Lebesgue time there before $\theta$. Therefore
\begin{equation}\label{4.104}
\E_{t,x}\int_0^\theta
H(X_{t+u})\1_{\{a(t+u)<X_{t+u}\le\tilde a(t+u)\}}\,du>0,
\end{equation}
which contradicts \eqref{4.103}. Hence $\tilde a(t)\le a(t)$ for all $0\le t<T$.

\emph{(vi).} We now prove the reverse inequality for $a(t)$. Suppose that there exists some $t<T$ such that $\tilde a(t)<a(t)$. Choose $x\in(\tilde a(t),a(t))$. Then $x\in[\tilde a(t),\tilde c(t)]$, and hence by \eqref{4.83} we have $V^{\tilde a,\tilde c}(t,x)=G(x)$. On the other hand, since $x<a(t)$, the point $(t,x)$ lies in the true continuation region. Therefore $V(t,x)<G(x)=V^{\tilde a,\tilde c}(t,x)$.

Define
\begin{equation}\label{4.105}
\tau_a=\inf\{u\in[0,T-t]:X_{t+u}\ge a(t+u)\}.
\end{equation}
Before $\tau_a$, the process stays below the true lower boundary, and hence $\1_{\{a(t+u)<X_{t+u}<c(t+u)\}}=0$ for $0\le u<\tau_a$. At time $t+\tau_a$, either the terminal time is reached, or the process hits the true lower boundary. In both cases,
\begin{equation}\label{4.106}
V^{\tilde a,\tilde c}(t+\tau_a,X_{t+\tau_a})
=
V(t+\tau_a,X_{t+\tau_a})
=
G(X_{t+\tau_a}).
\end{equation}
Indeed, if $t+\tau_a=T$, this follows from the terminal condition. If $t+\tau_a<T$, then $X_{t+\tau_a}=a(t+\tau_a)$. By the inequality already proved in \emph{(v)}, $\tilde a(t+\tau_a)\le a(t+\tau_a)$, and since $a(t+\tau_a)\le m\le M\le\tilde c(t+\tau_a)$, the point $a(t+\tau_a)$ belongs to the candidate stopping interval as well as to the true stopping interval. Applying \eqref{4.81} and \eqref{4.82} with $\theta=\tau_a$, and using \eqref{4.106}, we obtain
\begin{equation}\label{4.107}
0
=
V^{\tilde a,\tilde c}(t,x)-V(t,x)
+\E_{t,x}\int_0^{\tau_a}
H(X_{t+u})
\1_{\{\tilde a(t+u)<X_{t+u}<\tilde c(t+u)\}}\,du .
\end{equation}
The first term in \eqref{4.107} is strictly positive by the analysis in the first paragraph. Moreover, the integral term is non-negative. Indeed, on the set where the indicator is equal to one and $u<\tau_a$, we have $\tilde a(t+u)<X_{t+u}<a(t+u)\le m$, while $\tilde a(t+u)\ge\alpha$. Hence $X_{t+u}\in(\alpha,m)$, where $H>0$. Therefore the right hand side of \eqref{4.107} is strictly positive, which is impossible. Hence $\tilde a(t)\ge a(t)$ for $0\le t<T$. Combining this with $\tilde a(t)\le a(t)$, which was proved in \emph{(v)}, gives
\begin{equation}\label{4.108}
\tilde a(t)=a(t), \qquad 0\le t<T.
\end{equation}

The uniqueness of the upper boundary $c(t)$ is obtained in the same way, and we only point out the necessary changes. First, replacing the strip between $a$ and $\tilde a$ in \emph{(v)} by the strip between $\tilde c$ and $c$, and using that $H>0$ on $(M,\infty)$, one obtains $\tilde c(t)\ge c(t)$ for all $0\le t<T$. Conversely, if $\tilde c(t)>c(t)$ for some $t<T$, choosing $x\in(c(t),\tilde c(t))$ and stopping at the first hitting time of the true upper boundary leads, by the same comparison argument based on \eqref{4.81} and \eqref{4.82}, to the corresponding contradiction. Hence $\tilde c(t)=c(t)$. Together with \eqref{4.108}, this proves that every continuous pair $(\tilde a,\tilde c)$ satisfying \eqref{4.1} and \eqref{4.2} must coincide with $(a,c)$ and this completes the proof.
\end{proof}
\section{Examples}\label{sec:examples}

In this section we illustrate the results derived in the previous section through two specific examples. We assume throughout that the asset price is given by the CEV process \eqref{2.1} with $\beta>0$, and that the loss function $G$ is given by \eqref{3.3}, where $F$ denotes the distribution function of the aspiration level $\ell$. The boundary curves displayed below are obtained by solving the nonlinear integral equations in Theorem \ref{thm:main} by a backward-induction discretization. The numerical calculations are used only to plot the examples; any simulated paths shown in the figures are included only to illustrate the first-entry form of the optimal stopping time.

\begin{example}[\textbf{Resistance levels}] \label{resistance}
Consider the case when $\mu>0$ and the aspiration level $\ell$ is exponentially distributed with parameter $\lambda>0$. In the CEV model we take $\mu=0.15$, $\sigma=0.30$, $\beta=0.70$, $\lambda=2$, and $T=1$. Thus $F(x)=1-e^{-2x}$ and $m=M=(\log 2)/2\simeq 0.34657$. A direct numerical calculation gives $\alpha\simeq 0.32581$. By the result above, the optimal stopping time is given by $\tau^*=\inf\{t\in[0,T): a(t)\leq X_t\leq c(t)\}\wedge T$ (as in \eqref{4.3}), where $a$ and $c$ are the lower and upper optimal boundaries. For the parameters above, we have $a(T-)=\alpha\simeq 0.32581$, $c(T-)=+\infty$, $a(0)\simeq 0.34234$, and $c(0)\simeq 2.67449$. The optimal selling action triggered at $\tau^*$ creates a resistance level and pushes the price down at $a(t)$. However, the upper boundary $c(t)$ requires a different interpretation. It should not be viewed as a standard selling level for constructing trading strategies. Instead, if the price process starts above $c(t)$, the model suggests that the current price is already outside the region in which the present prediction framework gives a meaningful resistance-type trading signal. In this sense, $c(t)$ can be interpreted as a model-invalidation boundary from the perspective of trading strategies. Indeed, if $c(t)$ were treated as an optimal selling level and the price started above it, then the trader would have to wait for the price to decrease before selling the asset, which is not consistent with the usual interpretation of a resistance level. The reason is that our problem is not designed to find a maximum-profit trading strategy. Rather, it aims to find an optimal stopping time at which the asset price is as close as possible to the aspiration level $\ell$. Therefore, $c(t)$ is a boundary associated with closeness to the aspiration level, but not necessarily a boundary that maximises trading profit. This point will be explained further in the next section. See Figure 1.

\end{example}

\begin{figure}[htbp]
    \centering
    \begin{tikzpicture}[
        srcbox/.style={draw=black, line width=0.5pt},
        zoombox/.style={draw=black, line width=0.5pt, inner sep=0pt},
        conn/.style={draw=black!55, line width=0.4pt},
    ]
        \node[anchor=south west, inner sep=0] (main) at (0,0)
            {\includegraphics[width=0.57\textwidth]{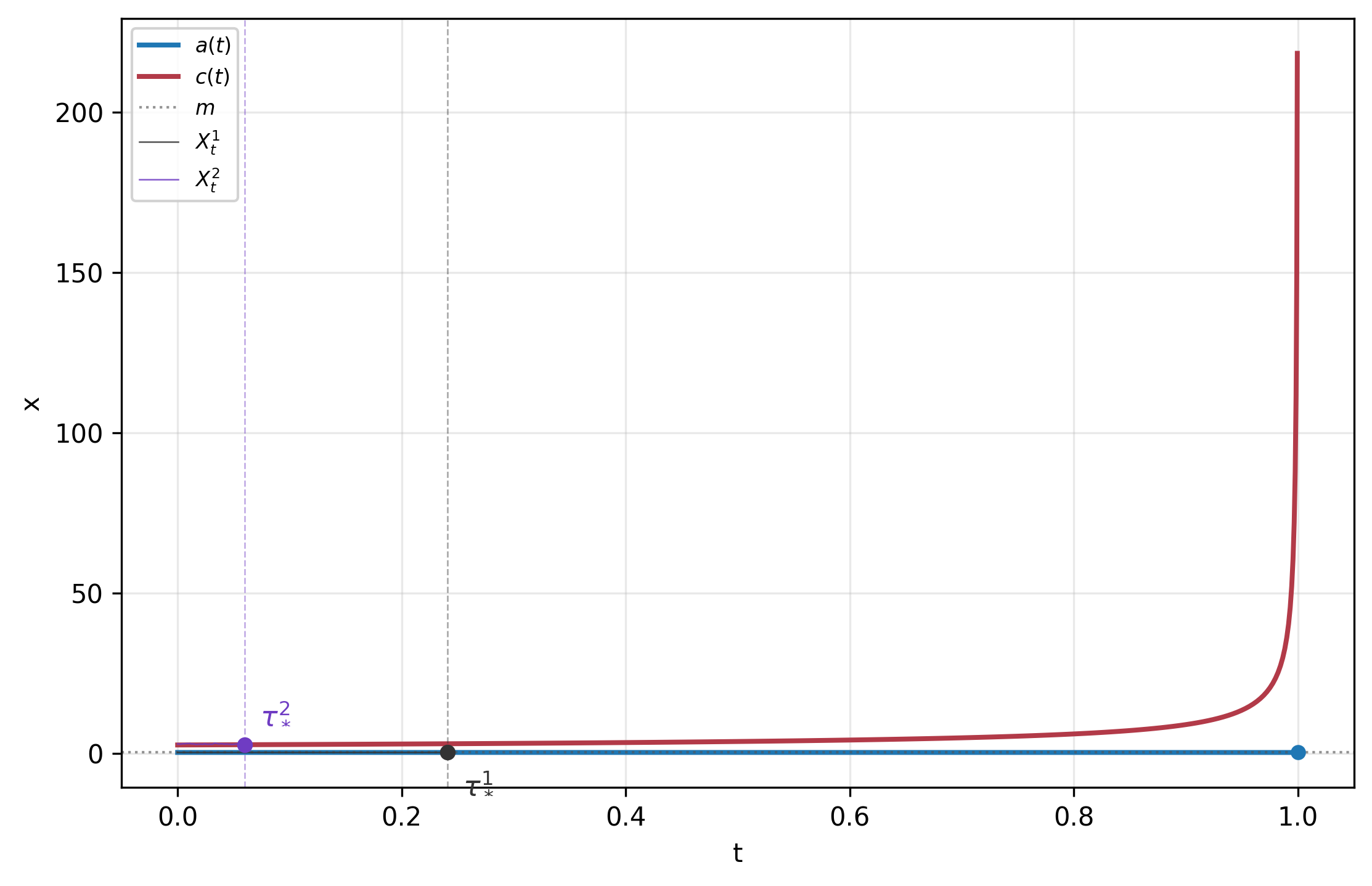}};
        \begin{scope}[x={(main.south east)}, y={(main.north west)}]
            \coordinate (srcSW) at (0.125,0.070);
            \coordinate (srcNE) at (0.340,0.200);
            \node[srcbox, fit=(srcSW)(srcNE), inner sep=0pt] (src) {};
            \node[zoombox, anchor=east] (zoom) at (-0.05,0.46)
                {\includegraphics[width=0.31\textwidth]{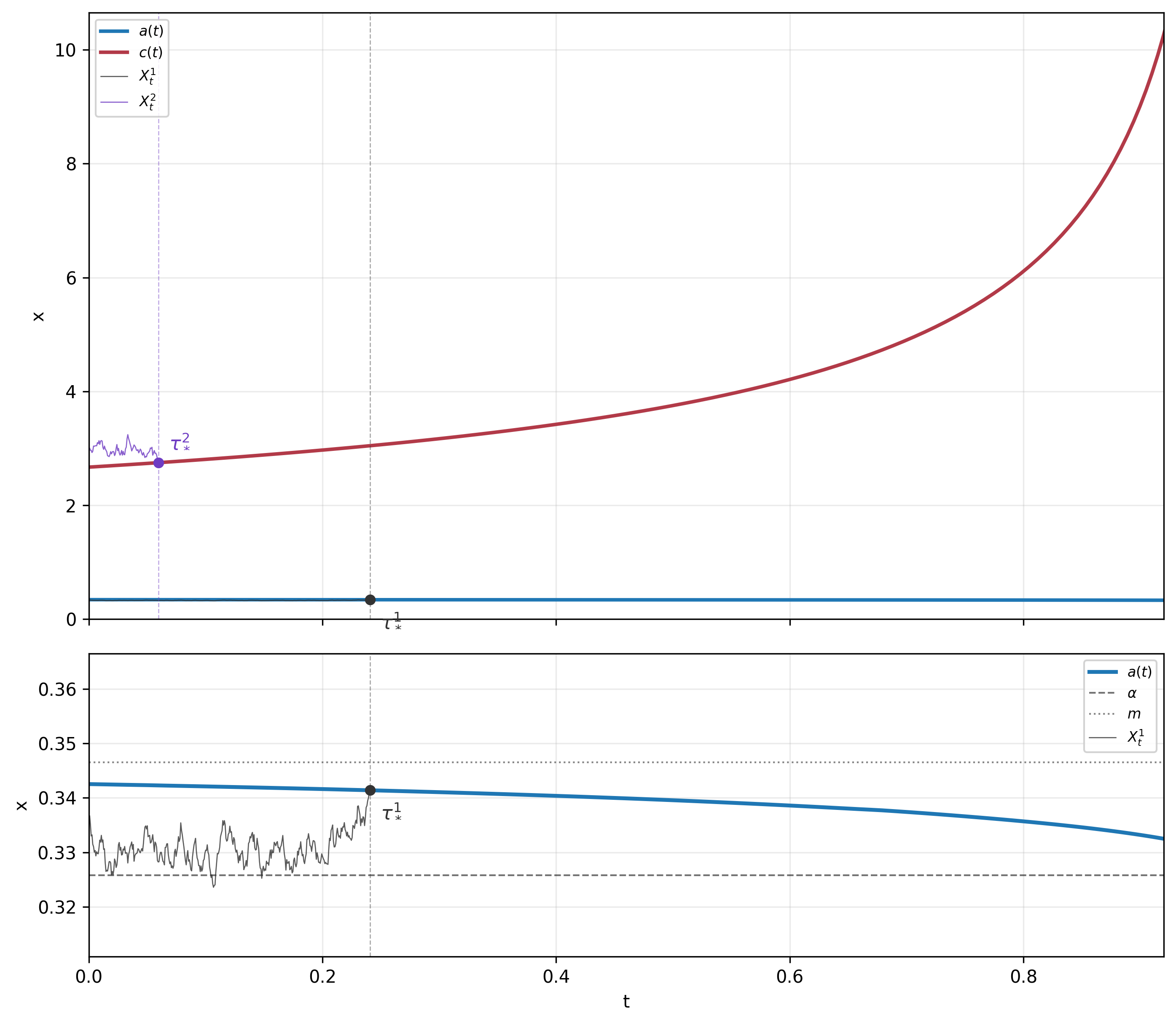}};
            \draw[conn] (src.north west) -- (zoom.north east);
            \draw[conn] (src.south west) -- (zoom.south east);
        \end{scope}
    \end{tikzpicture}
    \caption{The optimal resistance stopping band from Example~\ref{resistance} when the aspiration level $\ell$ is exponentially distributed with parameter $2$. The framed panel enlarges the boxed region: the upper part shows the band between $a(t)$ and $c(t)$ together with two sample paths, and the lower part magnifies the lower boundary $a(t)$ between $\alpha$ and $m$.}
    \label{fig:positive-boundaries}
\end{figure}

\begin{example}[\textbf{Support levels}] \label{support}
Consider the case when $\mu<0$ and the aspiration level $\ell$ is exponentially distributed with parameter $\lambda>0$. In the CEV model we take $\mu=-0.15$, $\sigma=0.30$, $\beta=0.70$, $\lambda=2$, and $T=1$, so that $F$ and the median $m=M$ are as in Example~\ref{resistance}. A direct numerical calculation gives $\gamma\simeq 0.37414$. By the result above, the optimal stopping time is given by $\tau^*=\inf\{t\in[0,T): X_t\leq b(t)\}\wedge T$ (as in \eqref{4.6}), where $b$ is the optimal support boundary. For the parameters above, we have $b(T-)=\gamma\simeq 0.37414$ and $b(0)\simeq 0.35320$. The optimal buying action triggered at $\tau^*$ creates a support level and pushes the price up. See Figure 2.
\end{example}

\begin{figure}[htbp]
    \centering
    \IfFileExists{mu_negative_boundary.png}{%
    \includegraphics[
        width=0.65\textwidth,
        height=0.25\textheight,
        keepaspectratio
    ]{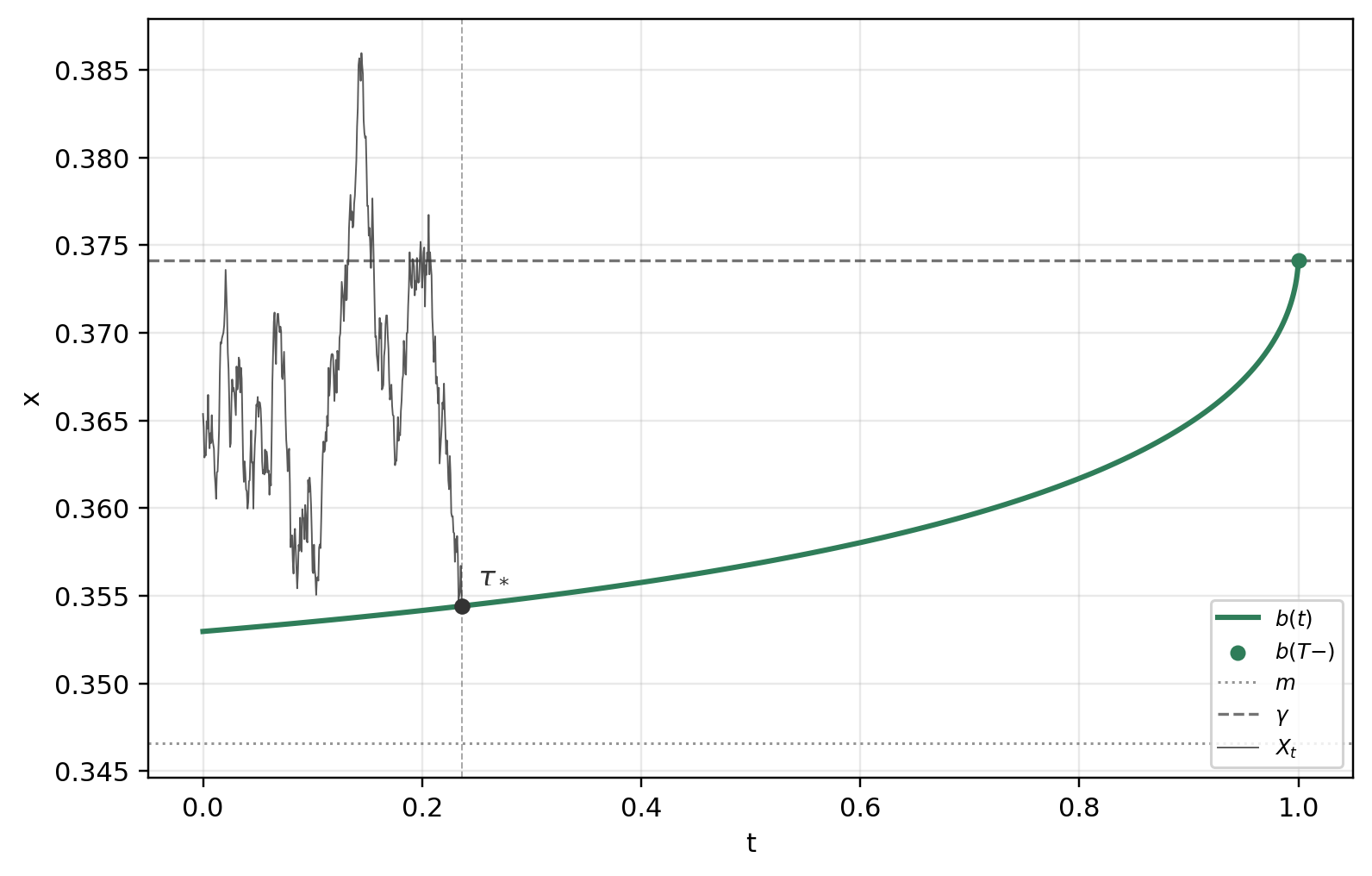}}{%
    \fbox{\parbox[c][0.25\textheight][c]{0.65\textwidth}{\centering \texttt{mu\_negative\_boundary.png}}}}
    \caption{The support level from Example~\ref{support} when the aspiration level $\ell$ is exponentially distributed with parameter $2$.}
    \label{fig:negative-boundary}
\end{figure}

In the next section, we will present some comparative statics.

\section{Comparative Statics} \label{sec:comparative}

For the comparative statics of $\mu>0$ and $\mu<0$, we adopt the parameter settings of example \ref{resistance} and example \ref{support}, respectively, except for the elasticity coefficients. In the positive drift case, we compute 8 pairs of optimal boundaries for $\beta \in [0.01,0.3]$ (figure 3), while in the negative drift case, we compute 10 boundary curves for $\beta \in [0.01,2]$ (figure 4). These numerical results illustrate how the optimal boundaries are affected by $\beta$.

In Figure 3, for every given and fixed $t$, it is clear that $c(t)$ increases dramatically as $\beta$ decreases. Moreover, when $\beta=0.01$, the value of $c(t)$ is too large to be shown in the figure; hence, there are only 7 curves of $c(t)$. This result provides evidence that $c(t) \to \infty$ for every given and fixed $t$ as $\beta \to 0$, which is consistent with the optimal boundaries in the GBM case (\cite{optimal_prediction_od_resistence_and_support_level}). In the enlarged part of Figure 3, for every given and fixed $t$, we see that the value of $a(t)$ decreases as $\beta$ decreases. Combining the above two results, we can conclude that the stopping area increases as $\beta$ decreases, which corresponds to the fact that when $\beta \to 0$, the CEV process will be reduced to a GBM. Financially interpreting this, when the price process starts under the lower resistance level $a(t)$, the process is less than 1, leading to the result that a higher elasticity coefficient leads to lower volatility. This means that, at any given and fixed time, the trader will have more confidence to wait, since the risk is relatively smaller, and that is why we see the behaviours of the lower boundaries.
 For the upper boundary \(c(t)\), it should not be interpreted as a
standard selling level. Instead, it is more appropriate to regard it as a model-invalidation boundary. Recall that, the CEV SDE gives the instantaneous relative volatility $\frac{\sigma X_t^{\beta+1}}{X_t}=\sigma X_t^\beta$.
Hence, in the high-price region \(x>1\), a larger value of \(\beta\) implies a higher volatility as the price level increases. In other words, when \(\beta\) is large, a high price represents a more unstable market state, so that the price has a larger chance to move towards a region closer to the representative aspiration level (recall again that our original optimal prediction problem is to find a time at which the price is closest to the aspiration level $\ell$ rather than finding the profit-maximising trading strategy) .
Therefore, the traders are more willing to wait for a price that is closer to the aspiration level. This leads to a lower upper boundary
\(c(t)\).

On the contrary, when \(\beta\) decreases, the volatility is smaller. High prices are then relatively more persistent under the positive drift \(\mu>0\), while the probability of moving back towards a region closer to the aspiration level is reduced. As a result, waiting becomes less valuable for prices that are high but not extremely
high. Immediate stopping remains optimal over a wider range of high prices, and the upper
edge of the stopping region is pushed upward. This explains why, for each fixed \(t\), the
boundary \(c(t)\) increases as \(\beta\) decreases.

Thus, the observed increase of \(c(t)\) as \(\beta\) decreases reflects the fact that the model
requires a more extreme price level before the upper continuation region becomes optimal. We emphasize again that $c(t)$ is a level such that the price gets close enough to the aspiration level rather than a level that makes the maximum profit.

\begin{figure}[htbp]
    \centering
    \begin{tikzpicture}[
        srcbox/.style={draw=black, line width=0.5pt},
        zoombox/.style={draw=black, line width=0.5pt, inner sep=0pt},
        conn/.style={draw=black!55, line width=0.4pt},
    ]
        \node[anchor=south west, inner sep=0] (main) at (0,0)
            {\includegraphics[width=0.57\textwidth]{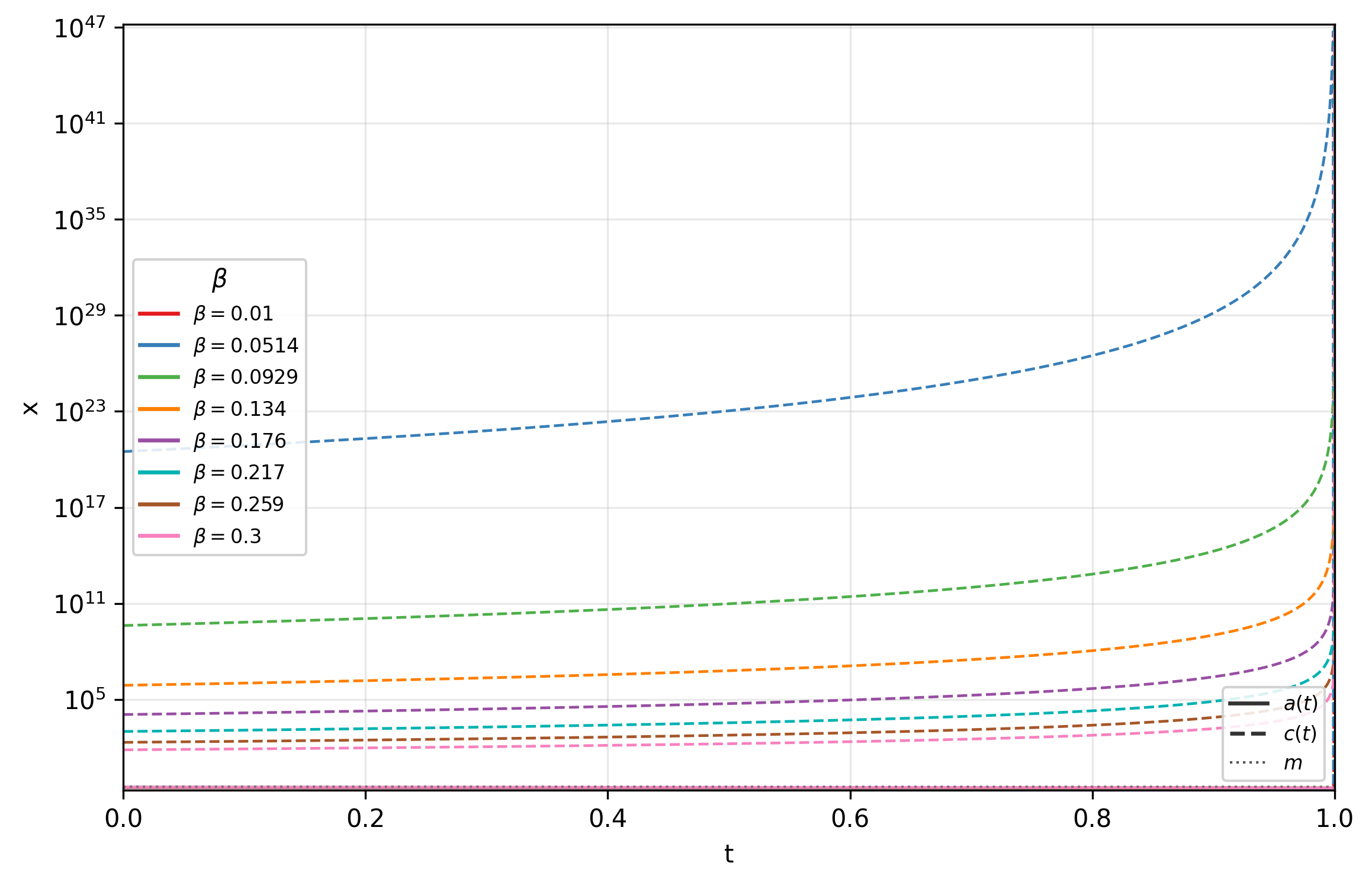}};
        \begin{scope}[x={(main.south east)}, y={(main.north west)}]
            \coordinate (srcSW) at (0.115,0.055);
            \coordinate (srcNE) at (0.965,0.150);
            \node[srcbox, fit=(srcSW)(srcNE), inner sep=0pt] (src) {};
            \node[zoombox, anchor=east] (zoom) at (-0.05,0.46)
                {\includegraphics[width=0.31\textwidth]{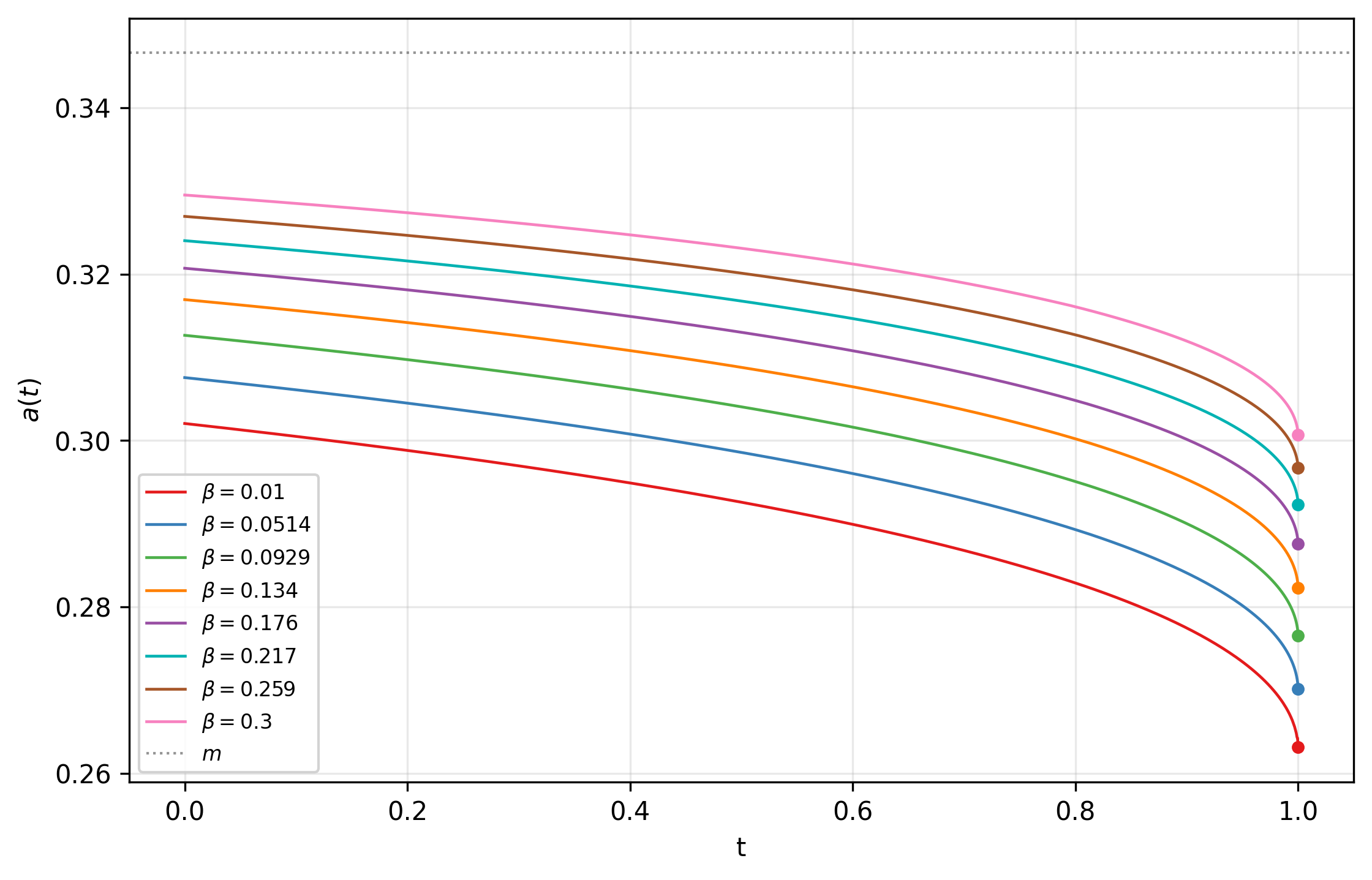}};
            \draw[conn] (src.north west) -- (zoom.north east);
            \draw[conn] (src.south west) -- (zoom.south east);
        \end{scope}
    \end{tikzpicture}
    \caption{The resistance boundaries $a(t)$ and $c(t)$ for eight equally spaced elasticity coefficients $\beta\in[0.01,0.30]$, with $c(t)$ shown on a logarithmic scale. The framed panel enlarges the boxed strip, magnifying the lower resistance boundaries $a(t)$. The final rise of each $c$-curve out of the top frame at $t=T$ depicts $c(T-)=+\infty$ and is not a computed value; every plotted point for $t<T$ is the raw output of the recursion.}
    \label{fig:positive-comparative}
\end{figure}

In Figure 4, for every given and fixed $t$, we see that the support boundary $b(t)$ increases as $\beta$ decreases. Since in the negative drift case the stopping region is given by $x\leq b(t)$, this means that the stopping area also increases as $\beta$ decreases. Financially speaking, the support boundary lies in the low-price region, where the price process is less than 1. Hence, when $x<1$, a smaller value of $\beta$ leads to a higher relative volatility. Under the negative drift $\mu<0$, this makes waiting more risky, since the price may move further below the representative aspiration level. Therefore, the trader is more willing to stop at a relatively higher price, and the support boundary $b(t)$ is pushed upward.

On the contrary, when $\beta$ increases, the volatility in the low-price region becomes smaller. In this case, the trader has more confidence to wait for a price that is closer to the aspiration level, since the risk of a large downward movement is relatively lower. This explains why, for each fixed $t$, the boundary $b(t)$ decreases as $\beta$ increases.

\begin{figure}[htbp]
    \centering
    \IfFileExists{mu_negative_comparative_boundary.png}{%
    \includegraphics[
        width=0.65\textwidth,
        height=0.25\textheight,
        keepaspectratio
    ]{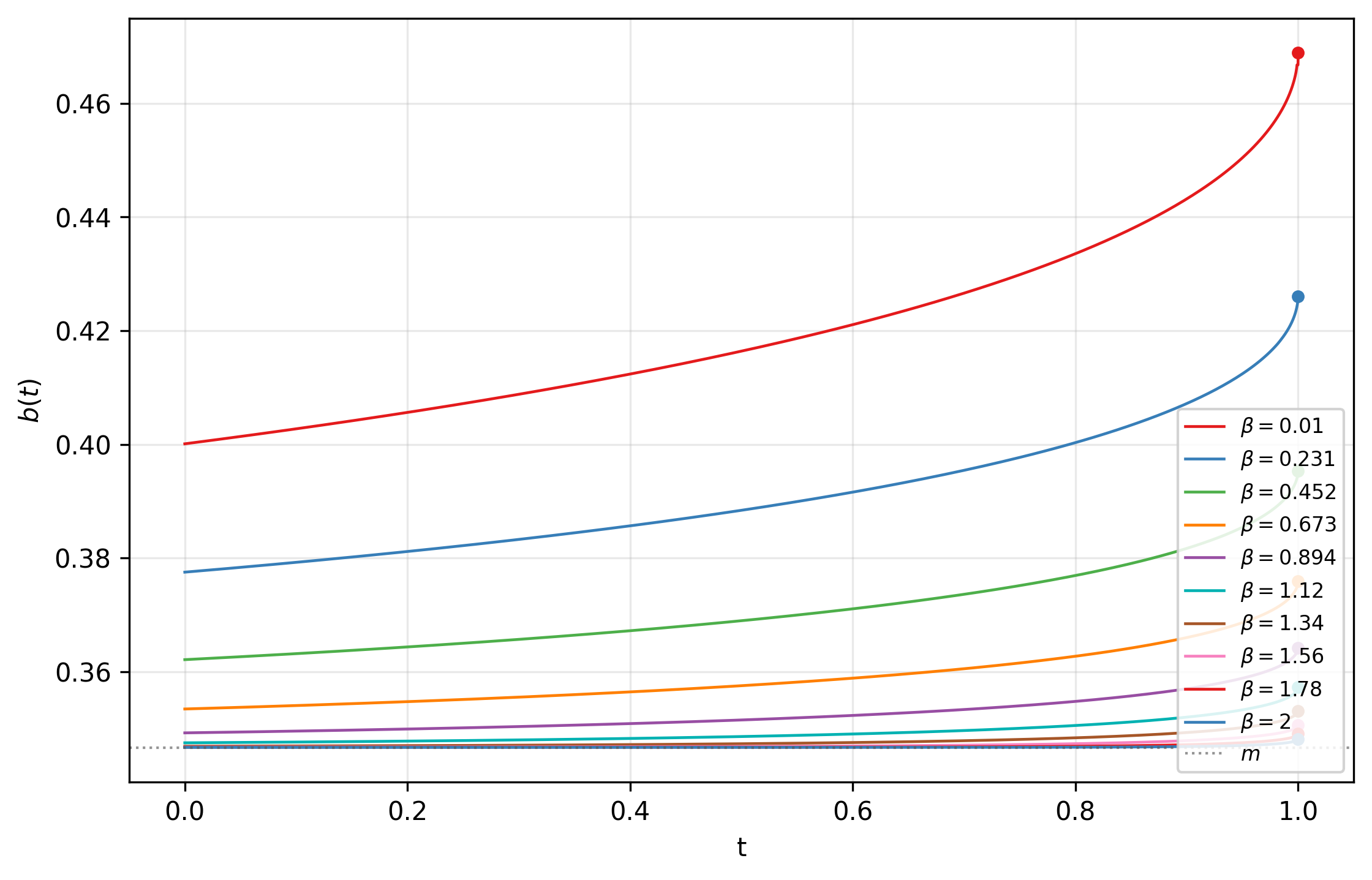}}{%
    \fbox{\parbox[c][0.25\textheight][c]{0.65\textwidth}{\centering \texttt{mu\_negative\_comparative\_boundary.png}}}}
    \caption{The support boundaries $b(t)$ for ten equally spaced elasticity coefficients $\beta\in[0.01,2]$, computed with $\mu=-0.15$, $\sigma=0.30$, $\lambda=2$ and $T=1$ as in Example~\ref{support}. For each fixed $t$, the boundary $b(t)$ increases as $\beta$ decreases.}
    \label{fig:negative-comparative}
\end{figure}

\section{Conclusion}

In this paper we extended the optimal prediction approach to resistance and support levels from the geometric Brownian motion setting to the CEV setting. By modelling the aspiration level as a hidden random variable independent of the price process, we reduced the original prediction problem to a finite-horizon optimal stopping problem with a convex loss function. Under suitable admissibility assumptions on the distribution of the aspiration level, we proved the existence and uniqueness of the corresponding optimal boundaries and derived nonlinear integral equations that characterise them.

The CEV specification leads to a different structure of the stopping region from the geometric Brownian motion case. In the positive drift case, the stopping region is described by a band bounded by two curves \(a(t)\) and \(c(t)\), while in the negative drift case it is described by a single support boundary \(b(t)\). The numerical examples illustrate these theoretical results and show how the integral equations can be solved by backward induction.

The comparative statics further demonstrate the effect of the elasticity coefficient on the optimal boundaries. In particular, decreasing \(\beta\) enlarges the stopping region in both the positive and negative drift cases, and the numerical results provide evidence that the upper boundary \(c(t)\) tends to infinity as \(\beta\to0\), which is consistent with the geometric Brownian motion case. These observations show that the level-dependent volatility in the CEV model plays an important role in the optimal prediction of resistance and support levels.

\section*{Acknowledgements}

I would like to thank my PhD supervisor, Professor Goran Peskir, for his continuous guidance, valuable discussions, and many helpful suggestions during the preparation of this paper.

\bibliographystyle{spmpsci}
\bibliography{ref}

\end{document}